\theoremstyle{plain}
\newtheorem{theorem}{Theorem}
\newtheorem{conjecture}[theorem]{Conjecture}
\newtheorem{lemma}[theorem]{Lemma}
\theoremstyle{definition}
\newtheorem{definition}[theorem]{Definition}
\theoremstyle{remark}
\newtheorem*{remark}{Remark}
\newcommand{\OPT}{\mathrm{OPT}}
\newcommand{\opt}{\mathrm{opt}}
\newcommand{\solM}{\mathrm{sol}(M)}
\newcommand{\tsp}{\mathrm{TSP}}
\newcommand{\eps}{\varepsilon}
\def\norm#1{\left\lVert  #1 \right\lVert}
\renewcommand{\P}{\mathbb{P}}
\newcommand{\rad}{\mathrm{rad}}
\title{Euclidean Capacitated Vehicle Routing in Random Setting: \newline A $1.55$-Approximation Algorithm}
\author{
Zipei Nie\footnote{Lagrange Mathematics and Computing Research Center, Huawei; Institut des Hautes Études Scientifiques, Paris, France, email: \texttt{niezipei@huawei.com}.}
\and
Hang Zhou\footnote{École Polytechnique, Institut Polytechnique de Paris, France, email: \texttt{hzhou@lix.polytechnique.fr}.}
}
\date{\today}
\begin{document}

\maketitle

\begin{abstract}
We study the unit-demand capacitated vehicle routing problem in the random setting of the Euclidean plane. The objective is to visit $n$ random terminals in a square using a set of tours of minimum total length, such that each tour visits the depot and at most $k$ terminals.

We design an elegant algorithm combining the classical sweep heuristic and Arora's framework for the Euclidean traveling salesman problem [Journal of the ACM 1998]. 
We show that our algorithm is a polynomial-time approximation of ratio at most $1.55$ asymptotically almost surely.
This improves on previous approximation ratios of $1.995$ due to Bompadre, Dror, and Orlin [Journal of Applied Probability 2007] and $1.915$ due to Mathieu and Zhou [Random Structures and Algorithms 2022]. 
In addition, we conjecture that, for any $\varepsilon>0$, our algorithm is a $(1+\varepsilon)$-approximation asymptotically almost surely.
\end{abstract}

\section{Introduction}

In the \emph{unit-demand capacitated vehicle routing problem (CVRP)}, we are given a set $V$ of $n$ \emph{terminals} and a \emph{depot} $O$.
The terminals and the depot are located in some metric space.
There is an unlimited number of identical vehicles, each of an integer \emph{capacity} $k$.
The tour of a vehicle starts at the depot and returns there after visiting at most $k$ terminals.
The objective is to visit every terminal, using a set of tours of minimum total length.
Unless explicitly mentioned, for all CVRP instances in this paper, each terminal is assumed to have \emph{unit demand}.
Vehicle routing is a basic type of problems in operations research, and several books~(see~\cite{anbuudayasankar2016models,crainic2012fleet,golden2008vehicle,toth2002vehicle} among others) have been written on those problems.

We study the \emph{Euclidean} version of the CVRP, in which all locations (the terminals and the depot) lie in the two-dimensional plane, and the distances are given by the Euclidean metric.
The Euclidean CVRP is a generalization of the Euclidean traveling salesman problem and is known to be NP-hard for all $k\geq 3$ (see~\cite{asano1997covering}).
Surprisingly, as stated in a survey of Arora~\cite{arora2003approximation}, the Euclidean CVRP is the first problem resisting Arora's famous framework on approximation schemes~\cite{arora1998polynomial}.
Whether there is a polynomial-time $(1+\eps)$-approximation for the Euclidean CVRP for any $\eps>0$ is a fundamental question and remains open regardless of numerous efforts for several decades, e.g.,~\cite{adamaszek2010ptas,arora2003approximation,asano1997covering,das2015quasipolynomial,GMZ2023,haimovich1985bounds,jayaprakash2023approximation,khachay2016ptas,MZ23_PTAS}.

Given the difficult challenges in the Euclidean CVRP, researchers turned to an analysis beyond worst case, by making some probabilistic assumptions on the distribution of the input instance. 
In 1985, Haimovich and Rinnooy~Kan~\cite{haimovich1985bounds} first studied this problem in the \emph{random setting}, where the terminals are $n$ \emph{independent, identically distributed (i.i.d.)} uniform random points in $[0,1]^2$.
An event $\mathcal{E}$ occurs \emph{asymptotically almost surely (a.a.s.)} if $\lim_{n\to\infty}\P[\mathcal{E}]=1$.
It is a long-standing open question whether, in the random setting, there is a polynomial-time $(1+\eps)$-approximation for the Euclidean CVRP a.a.s.\ for any $\eps>0$.
Haimovich and Rinnooy~Kan~\cite{haimovich1985bounds} introduced the classical \emph{iterated tour partitioning (ITP)} algorithm, and they raised the question whether, in the random setting, ITP is a $(1+\eps)$-approximation a.a.s.\ for any $\eps>0$.
Bompadre, Dror, and Orlin~\cite{bompadre2007probabilistic} showed that, in the random setting, the approximation ratio of ITP is at most 1.995 a.a.s.
Recently, Mathieu and Zhou~\cite{MZ22} showed that, in the random setting, the approximation ratio of ITP is at most 1.915 and at least $1+c_0$, for some constant $c_0>0$, a.a.s.
Consequently, designing a $(1+\eps)$-approximation in the random setting for any $\eps>0$ would require new algorithmic insights.

\paragraph{Theoretical and Practical Perspectives.}

Almost all CVRP algorithms that have theoretical guarantees in the Euclidean plane are (either completely or partially) based on ITP, e.g.~\cite{adamaszek2010ptas,altinkemer1987heuristics,altinkemer1990heuristics,asano1997covering,blauth2023improving,bompadre2006improved,bompadre2007probabilistic,das2015quasipolynomial,friggstad2021improved,GMZ2023,li1990worst,MZ22}.
However, ITP is seldom used in industry, because its practical performance is not as good as many other heuristics~\cite{viger}.

In this paper, we design an elegant algorithm (\cref{alg:main}) that is completely different from ITP. 
Our algorithm is inspired by the classical \emph{sweep heuristic} (see \cref{sec:related}), one of the most popular heuristics in practice.
Interestingly, we show that, in the random setting, our algorithm achieves the best-to-date theoretical performance, and leads to a significant progress towards a $(1+\eps)$-approximation for any $\eps>0$.
See \cref{sec:results}.

\subsection{Our Results}
\label{sec:results}

\paragraph{Algorithm.}

We present a simple polynomial-time algorithm for the Euclidean CVRP. See \cref{alg:main}.
For each terminal $v$, let $\theta(v)\in [0,2\pi)$ denote the polar angle of $v$ with respect to $O$.
First, we sort all terminals in nondecreasing order of $\theta(v)$.
Let $M\geq 1$ be a constant integer parameter. 
Then we decompose the sorted sequence into subsequences, each consisting of $M k$ consecutive terminals, except possibly for the last subsequence containing less terminals.
Next, for the terminals in each subsequence, we compute a near-optimal solution to the CVRP using Arora's framework for the Euclidean traveling salesman problem~\cite{arora1998polynomial}, see also \cref{sec:preli}. 

\begin{algorithm}[h]
    \caption{Algorithm for the CVRP in $\mathbb{R}^2$. Constant integer parameter $M\ge 1$.}
    \label{alg:main}
    \KwIn{set $V$ of $n$ terminals in $\mathbb{R}^2$, depot $O\in \mathbb{R}^2$, capacity $k\in\{1,2,\dots,n\}$}
    \KwOut{set of tours covering all terminals in $V$}
    Sort the terminals in $V$ into $u_1,u_2,\dots,u_n$ such that $\theta(u_1)\le \theta(u_2)\le \cdots \le \theta(u_n)$
    
    \For{$i\gets 1$ \KwTo $\left\lceil\frac{ n  }{Mk}\right\rceil$}{
    $V_i\gets \left\{u_j: (i-1)\cdot Mk<j\le i\cdot M k\right\}$
    
    Compute a $\left(1+\frac{1}{M}\right)$-approximate solution $S_i$ to the subproblem $(V_i,O,k)$
    \Comment{\cref{lem:Arora-modified}}
    }
    \Return union of all $S_i$
\end{algorithm}

Our main result shows that, in the random setting, \cref{alg:main} has an approximation ratio at most $1.55$ a.a.s., see \cref{thm:main}.
This improves on previous ratios of $1.995$ due to Bompadre, Dror, and Orlin~\cite{bompadre2007probabilistic} and $1.915$ due to Mathieu and Zhou~\cite{MZ22}. 
Furthermore, we conjecture that, in the random setting, \cref{alg:main} is a $(1+\eps)$-approximation for any $\eps>0$ a.a.s., see \cref{conjecture}.

\begin{theorem}
\label{thm:main}
Consider the unit-demand Euclidean CVRP with a set~$V$ of $n$ terminals {that} are i.i.d.\ uniform random points in {$[0,1]^2$}, a fixed depot $O\in \mathbb{R}^2$, and a capacity~$k$ that takes an arbitrary value in $\{1,2,\dots,n\}$. 
For any constant integer $M\geq 10^5$, \cref{alg:main} with parameter $M$ is a polynomial-time approximation of ratio at most $1.55$ asymptotically almost surely.
\end{theorem}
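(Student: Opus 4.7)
The plan is to compare the algorithm's output with two classical lower bounds for the Euclidean CVRP: the Haimovich--Rinnooy Kan radial bound $\OPT \ge \tfrac{2}{k}\sum_{v \in V} d(v, O)$ and the TSP bound $\OPT \ge \TSP(V \cup \{O\})$. By the $(1+1/M)$-approximation guarantee of \cref{lem:Arora-modified} applied to each subproblem, the cost of \cref{alg:main} is at most $(1 + 1/M) \sum_i \OPT(V_i, O, k)$, where $\OPT(V_i, O, k)$ denotes the CVRP optimum restricted to the terminal set $V_i$. It therefore suffices to bound $\sum_i \OPT(V_i, O, k)$ in terms of the two lower bounds above.

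For each $i$, the set $V_i$ consists of $Mk$ terminals lying in a narrow angular sector of $O$. I would upper-bound $\OPT(V_i, O, k)$ by exhibiting an explicit feasible CVRP solution: further subdivide $V_i$ into $M$ consecutive-by-angle groups of $k$ terminals each, and form one tour per group visiting those $k$ terminals in near-radial order. In a thin angular sector, such a tour has length at most $2\, r^{\max}_{i,j}$ plus a tangential correction, where $r^{\max}_{i,j}$ denotes the largest distance from $O$ attained in the $j$-th group of $V_i$.

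The key geometric ingredient is the following observation: within any thin angular sector emanating from $O$, the radial density of a uniformly distributed point in $[0,1]^2$ is proportional to $r$, so the expected value of the radius equals $\tfrac{2}{3}$ of the maximum achievable radius along that direction. Using this together with standard concentration of measure for i.i.d.\ samples, one establishes a.a.s.\ that
\[
2\sum_{i,j} r^{\max}_{i,j} \;\le\; \Bigl(\tfrac{3}{2}+o(1)\Bigr)\cdot \tfrac{2}{k}\sum_{v\in V} d(v,O),
\]
while the tangential contributions add up to at most $\TSP(V\cup\{O\}) + o(\OPT)$. Combined with the $(1+1/M)$ overhead and the two lower bounds on $\OPT$, this yields the target ratio.

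The main technical obstacle is the intermediate regime where the radial bound $\tfrac{2}{k}\sum_v d(v,O)$ and the TSP bound $\TSP(V\cup\{O\})$ are of comparable order: a crude use of $\OPT \ge \max$ of the two bounds yields only a ratio close to $5/2$, so to reach $1.55$ the proof must either (i) sharpen the lower bound on $\OPT$ in this regime, or (ii) show that the pie-slice construction never pays the full $\tfrac{3}{2}$ factor on the radial part and the full TSP contribution simultaneously. This delicate intermediate analysis is where essentially all the difficulty lies, and the gap between the $1.55$ ratio proved here and the $(1+\varepsilon)$ ratio conjectured in \cref{conjecture} reflects precisely this source of looseness.
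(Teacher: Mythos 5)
Your proposal correctly identifies the bottleneck — the regime where $\tfrac{2}{k}\sum_v d(O,v)$ and $\TSP(V\cup\{O\})$ are comparable — but it stops exactly where the paper's real work begins, and the two remaining gaps cannot be closed with what you have written.

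First, the lower bound. You propose $\opt \ge \max\{\tfrac{2}{k}\sum_v d(O,v),\ \TSP(V\cup\{O\})\}$ and observe this alone cannot beat $5/2$. You then flag option (i), ``sharpen the lower bound in this regime,'' as a possible repair without saying how. This option is in fact the paper's central contribution: for every radius $R\ge 0$ it proves the \emph{additive} bound
\[
\opt \;\ge\; T^*_R + \rad_R - O(D),\qquad \rad_R:=\tfrac{2}{k}\textstyle\sum_v \min\{d(O,v),R\},
\]
where $T^*_R$ is the TSP cost of the terminals at distance at least $R$ from $O$. At $R=0$ this recovers the TSP bound; at $R=\infty$ it recovers the radial bound; for intermediate $R$ it is strictly stronger than their maximum, because it is a genuine \emph{sum}. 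The proof (Section~2 of the paper) couples an Eulerian argument on the curves outside the circle of radius $R$ with a per-tour charging scheme: each tour is charged either $2\max_v d(O,v)$ or $2R$ plus the lengths of the outer arcs it uses. Nothing in your proposal produces such a combined bound, and without it the analysis cannot reach $1.55$ in the balanced regime.

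Second, the upper bound. Your pie-slice construction — subdividing each $V_i$ into $M$ consecutive groups of $k$ and touring each in near-radial order — pays $2\,r^{\max}_{i,j}$ per group, and by your own density argument this aggregates to roughly $\tfrac{3}{2}\cdot\tfrac{2}{k}\sum_v d(O,v)$. That $\tfrac{3}{2}$ is fatal: even pairing it with the paper's combined lower bound, the radial term contributes a ratio of $\tfrac{3}{2}\cdot\tfrac{48}{31}>2$. The paper avoids the $\tfrac{3}{2}$ loss entirely by \emph{not} upper-bounding $\OPT(V_i,O,k)$ with an explicit pie-slice solution. Instead it invokes the Altinkemer--Gavish ITP bound $\OPT(V_i,O,k)\le \TSP(V_i\cup\{O\})+\tfrac{2}{k}\sum_{v\in V_i}d(O,v)$, whose radial part is exactly the radial cost with no $\tfrac{3}{2}$ overhead, and then sums the TSP terms across sectors using Karp's dissection inequality applied to the disjoint sector regions $Y_i$. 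Your ``tangential contributions'' heuristic plays the role of this TSP aggregation step, but only at the pie-slice level, where it is not strong enough.

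In short: you have located the right hard case and correctly estimated that the crude $\max$ bound only yields $5/2$, but the proposal contains neither the $R$-parameterized additive lower bound nor the ITP-based upper bound, and both are needed. As written the argument proves nothing better than a constant near $5/2$.
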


\begin{conjecture}
\label{conjecture}
Consider the unit-demand Euclidean CVRP with $V$, $O$, and $k$ defined in \cref{thm:main}. 
For any $\eps>0$, there exists a positive constant integer $M$ depending on $\eps$, such that \cref{alg:main} with parameter $M$ is a polynomial-time $(1+\eps)$-approximation asymptotically almost surely.
\end{conjecture}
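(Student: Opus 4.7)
The plan is to strengthen \cref{thm:main} from a fixed ratio of $1.55$ to a $(1+\eps)$-approximation. By \cref{lem:Arora-modified}, we already have
\[
\mathrm{ALG} \le \left(1+\tfrac{1}{M}\right) \sum_i \mathrm{CVRP}(V_i),
\]
where $\mathrm{CVRP}(V_i)$ denotes the optimal CVRP cost on the subproblem $(V_i,O,k)$. Choosing $M$ large enough that $1/M < \eps/2$, the task reduces to proving, asymptotically almost surely,
\[
\sum_i \mathrm{CVRP}(V_i) \le \left(1+\tfrac{\eps}{2}\right)\OPT
\]
for sufficiently large $M = M(\eps)$.

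The key observation is that the left-hand side above is precisely the optimum CVRP cost over solutions \emph{constrained to respect} the angular partition $\{V_i\}$, i.e., solutions in which each tour's $k$ terminals lie inside a single slice. The conjecture is therefore equivalent to saying that, in the random model, imposing angular coherence at granularity $Mk$ costs at most a $(1+o_M(1))$ multiplicative factor over the unrestricted $\OPT$. I would attack this via a structural uncrossing argument applied to $\OPT$: whenever two tours have angularly interleaved terminals, swap a pair between them so as to reduce the total number of cross-slice transitions. Because swapped terminals come from the same angular neighborhood, they are Euclidean-close with high probability in the random model; a careful exchange analysis using the triangle inequality should then bound the total added length by a quantity concentrated around a sum over the $n/(Mk)$ slice boundaries.

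The main obstacle is bounding this cumulative uncrossing cost uniformly in $k$. Two regimes appear tractable. For $k > n/M$ there is a single slice and Arora's framework already gives $(1+1/M)$ directly. For $k = o(\sqrt{n}/M)$ the radial cost dominates $\OPT$, slices are angularly thin, the total cross-slice overhead should scale as $O(n/(Mk))$ while $\OPT = \Theta(n/k)$, yielding a relative loss of $O(1/M)$. The delicate case is the balanced regime around $k = \Theta(\sqrt{n})$, where the radial and TSP lower bounds are of the same order and typical OPT tours can have angular spans comparable to a slice width $Mk/n$. Closing the gap there appears to require sharp probabilistic control on the angular spread of near-optimal CVRP tours, perhaps by combining Rhee--Talagrand-type concentration results with a refined Beardwood--Halton--Hammersley analysis on thin angular wedges; this is a substantial undertaking, which is why the statement is presented only as a conjecture.
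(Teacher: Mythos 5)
The statement you are addressing is \cref{conjecture}, which the paper explicitly leaves open: the authors prove only the $1.55$ bound of \cref{thm:main}, and they even remark that their upper bound (\cref{thm:upper}) is asymptotically the same as the ITP upper bound, which is known to be tight and to force a ratio bounded away from $1$; so any proof of the conjecture must go beyond the paper's own machinery. Your reduction at the start is fine and essentially cost-free: by \cref{lem:Arora-modified}, $\solM \le (1+\frac{1}{M})\sum_i \norm{S_i^*}$, and $\sum_i \norm{S_i^*}$ is exactly the optimum over solutions whose tours respect the angular slices. But from that point on your argument is a program, not a proof, and the missing piece is precisely the open content of the conjecture: a lemma stating that enforcing angular coherence at granularity $Mk$ costs only a $(1+o_M(1))$ factor over the unrestricted optimum, asymptotically almost surely.

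Concretely, two steps would fail as written. First, the uncrossing step relies on the claim that terminals swapped between angularly interleaved tours are ``Euclidean-close with high probability.'' This is not true in general: an angular slice containing $Mk$ points is thin in angle but extends across the full radial range of the square, so two points of the same slice can be at distance $\Theta(1)$; moreover, to make every tour lie inside a single slice you need a global reassignment of terminals subject to the capacity constraint, not a sequence of local pairwise swaps that merely reduces the number of cross-slice transitions, and no bound is offered on either the number of exchanges or the cost of restoring feasibility after them. Second, the regime analysis is heuristic (``should scale as $O(n/(Mk))$'') and, as you yourself note, it leaves open exactly the balanced regime $k=\Theta(\sqrt{n})$, where the radial and local contributions to $\opt$ are comparable and where the known lower-bound instances for partition-based algorithms live. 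Since the hardest case is conceded rather than handled, the proposal does not establish \cref{conjecture}; it is a plausible route, but the key exchange lemma and the concentration estimates it would need (control of the angular spread of near-optimal tours in random instances) are exactly what remains to be invented.
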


\subsection{Overview of Techniques}

A main contribution in our analysis is the novel concepts of \emph{$R$-radial cost} and \emph{$R$-local cost}. These are generalizations of the classical \emph{radial cost} and \emph{local cost} introduced by Haimovich and Rinnooy~Kan~\cite{haimovich1985bounds}.

\begin{definition}[$R$-radial cost]
\label{def:R-radial}
For any $R\in\mathbb{R}^+\cup \left\{0,\infty\right\}$, define the \emph{$R$-radial cost} $\rad_R$ by
    \[\rad_R :=\frac{2}{k} \sum_{v\in V}\min\left\{ d(O,v) , R\right\}.\]
\end{definition}

\begin{definition}[$R$-local cost]\label{def:R-local}
For any $R\in\mathbb{R}^+\cup \left\{0,\infty\right\}$, define the \emph{$R$-local cost} $T^*_R$ as the minimum cost of a traveling salesman tour on the set of points $\left\{v\in V:d(O,v)\ge R\right\}$.
\end{definition}

Using the $R$-radial cost and the $R$-local cost, we establish a new lower bound (\cref{thm:lower}) on the cost of an optimal solution.
This lower bound is a main novelty of the paper.
It unites both classical lower bounds from \cite{haimovich1985bounds}: when $R=0$, it leads asymptotically to one classical lower bound, which is the local cost; and when $R=\infty$, it leads asymptotically to the other classical lower bound, which is the radial cost.
The proof of \cref{thm:lower} {is} in \cref{sec:lower}.

\begin{theorem}\label{thm:lower}
Consider the unit-demand Euclidean CVRP with any set $V$ of $n$ terminals in $\mathbb{R}^2$, any depot $O\in\mathbb{R}^2$, and any capacity $k\in \mathbb{N}^+$.
Let $\opt$ denote the cost of an optimal solution. For any $R\in\mathbb{R}^+\cup \left\{0,\infty\right\}$, we have
\[\opt\ge T^*_R+ \rad_R -{\frac{3\pi D}{2}},\]
where $D$ denotes the diameter of $V\cup\{O\}$. 
\end{theorem}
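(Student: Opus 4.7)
The plan is to decompose each tour of an optimal CVRP solution into its inside and outside portions relative to the disk $B(O,R)$ centered at $O$, and to lower-bound each separately so as to extract $\rad_R$ from the inside part and $T^*_R$ (up to a boundary-patching error) from the outside part. Fix an optimal solution $\mathcal{S}^*=\{S_1,\dots,S_m\}$ with $\opt=\sum_j |S_j|$, and write $|S_j|=L_j^{\mathrm{in}}+L_j^{\mathrm{out}}$, where $L_j^{\mathrm{in}}$ and $L_j^{\mathrm{out}}$ are the total lengths of $S_j$ inside and outside $B(O,R)$. Denote by $n_j\le k$ the number of terminals visited by $S_j$.

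My first step is the per-tour bound
\[ L_j^{\mathrm{in}} \;\ge\; \frac{2}{k}\sum_{v\in S_j}\min\{d(O,v),R\}. \]
If $S_j\subseteq B(O,R)$ this reduces to the classical amortized-radial inequality $|S_j|\ge 2\max_{v\in S_j} d(O,v)\ge (2/n_j)\sum_{v\in S_j} d(O,v)\ge (2/k)\sum_{v\in S_j} d(O,v)$. Otherwise $S_j$ must leave and re-enter $B(O,R)$, so the inside portion contains a subpath from $O$ to $\partial B(O,R)$ and another from $\partial B(O,R)$ back to $O$, forcing $L_j^{\mathrm{in}}\ge 2R$; meanwhile the right-hand side is at most $(2/k)\cdot n_j\cdot R\le 2R$. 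Summing over $j$ yields $\sum_j L_j^{\mathrm{in}}\ge \rad_R$.

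My second step constructs a traveling-salesman tour on $V_R$ from the outside portions. These portions form a family of paths covering $V_R$, each with its two endpoints on $\partial B(O,R)$. Adding patching edges (along $\partial B(O,R)$, and possibly using the convex hull of $V\cup\{O\}$ whose perimeter is at most $\pi D$), then Eulerianizing and shortcutting, would yield a TSP tour on $V_R$ of length at most $\sum_j L_j^{\mathrm{out}}+P$, where $P$ denotes the total patching cost. Hence $\sum_j L_j^{\mathrm{out}}\ge T^*_R-P$, and combining with the first step gives $\opt\ge \rad_R+T^*_R-P$.

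The hard part will be showing $P\le 3\pi D/2$ uniformly in $R$. Patching trivially around $\partial B(O,R)$ only gives $P\le 2\pi R$, which exceeds $3\pi D/2$ in the regime $R>3D/4$. To handle that regime, the key geometric observation is that any two points $v,v'\in V_R$ satisfy $|v-v'|\le D$ with $|v|,|v'|\ge R$, so by the law of cosines $\cos\angle vOv'\ge 1-D^2/(2R^2)$, which forces $V_R$ into a narrow angular sector about $O$ (of half-angle approaching $\pi/6$ as $R\to D$). Combining this angular concentration with a carefully chosen patching along $\partial B(O,R)$ and convex-hull edges of $V\cup\{O\}$ should deliver $P\le 3\pi D/2$; rigorously carrying out this geometric case analysis is the technical heart of the proof.
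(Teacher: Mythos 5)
Your high-level plan is the same as the paper's: split each optimal tour into its portion inside and outside the disk $B(O,R)$, extract $\rad_R$ from the inside portions, and patch the outside portions into a TSP tour on $V_R$. Your Step 1 (the per-tour bound $L_j^{\mathrm{in}} \ge \frac{2}{k}\sum_{v\in S_j}\min\{d(O,v),R\}$, split into the two cases $S_j\subseteq B(O,R)$ and $S_j\not\subseteq B(O,R)$) is correct and is essentially identical to the paper's Lemma~\ref{lem:normS}.

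The gap is entirely in Step 2, and you flag it yourself: you do not have a proof that the patching cost $P$ is at most $\frac{3\pi D}{2}$, and the direction you sketch (angular concentration of $V_R$) is not the right one. The missing observation is elementary: the endpoints of the outside arcs are the crossing points $y_1,\dots,y_{2t}$ of $\OPT$ with the circle $\partial B(O,R)$, and since the optimal tours are polygons with vertices in $V\cup\{O\}$, every $y_i$ lies in the convex hull of $V\cup\{O\}$. Hence the diameter $D'$ of $\{y_1,\dots,y_{2t}\}$ is at most $D$, \emph{regardless of $R$}. Because the $y_i$'s lie on a common circle, sorting them by angle gives a convex polygon whose boundary segments $y_iy_{i+1}$ have total length equal to the perimeter of their convex hull, which is at most $\pi D'\le \pi D$ by the classical inequality (perimeter $\le \pi\cdot$diameter for convex bodies). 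Adding all these $2t$ segments plus the cheaper of the two alternating halves (odd- or even-indexed segments) makes the union with $\{C_i\}$ connected and of even degree everywhere, so it contains an Eulerian walk through all of $V_R$, and the patching cost is at most $\frac{3}{2}\,\pi D'\le \frac{3\pi D}{2}$. Your law-of-cosines estimate bounds the angular spread of $V_R$, not of the $y_i$'s, and would still leave you to argue about the perimeter of the arc connecting the $y_i$'s; the convex-hull-of-$V\cup\{O\}$ containment sidesteps all of that and gives the bound in one line. In short: right skeleton, correct Step 1, but the geometric lemma that makes Step 2 work is missing and the replacement you propose is both more complicated and not carried out.
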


Next, we establish an upper bound (\cref{thm:upper}) on the cost of the solution in \cref{alg:main} using the $0$-local cost and the $\infty$-radial cost.
The proof of Theorem \ref{thm:upper} is in \cref{sec:upper}.

\begin{theorem}\label{thm:upper}
Consider the unit-demand Euclidean CVRP with any set $V$ of $n$ terminals in $\mathbb{R}^2$, any depot $O\in\mathbb{R}^2$, and any capacity $k\in \mathbb{N}^+$.
For any positive integer $M$, let $\solM$ denote the cost of the solution returned by \cref{alg:main} with parameter $M$.
Then we have
\[\solM \le \left(1+\frac{1}{M}\right)\left(T_0^* +\rad_\infty+ \frac{3 \pi D}{2} \left\lceil\frac{ n }{Mk}\right\rceil \right),\]
where $D$ denotes the diameter of $V\cup\{O\}$. 
\end{theorem}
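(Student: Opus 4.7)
First, by Arora's modified framework (\cref{lem:Arora-modified}), $|S_i| \le (1+1/M)\,\opt(V_i, O, k)$ for each $i$, so $\solM \le (1+1/M) \sum_i \opt(V_i, O, k)$. It then suffices to prove
\[\sum_i \opt(V_i, O, k) \le T_0^* + \rad_\infty + \frac{3\pi D}{2} \left\lceil\frac{n}{Mk}\right\rceil.\]

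To bound each $\opt(V_i, O, k)$, the plan is a two-step construction: first a Hamiltonian cycle $H_i$ on $V_i$, then the classical iterated tour partitioning (ITP) applied to $H_i$. The standard ITP averaging over $k$ cyclic shifts---each cutting $M = |V_i|/k$ edges of $H_i$ and adding two depot legs per resulting path---yields a feasible CVRP solution on $V_i$ of cost at most $|H_i| + \frac{2}{k}\sum_{v\in V_i}d(O,v)$. Summing over $i$ gives $\sum_i \opt(V_i, O, k) \le \sum_i|H_i| + \rad_\infty$, which reduces the claim to exhibiting cycles with $\sum_i|H_i| \le T_0^* + \frac{3\pi D}{2}\lceil n/(Mk)\rceil$.

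The cycles $H_i$ will be extracted from an optimal TSP tour $T^*$ of $V$ together with the angular sector structure of the partition. For each $i$, let $\Sigma_i$ be the angular wedge with apex $O$ containing $V_i$, chosen so that the $\Sigma_i$'s partition the plane (with any angular gaps between successive $V_i$'s absorbed into adjacent sectors). Since the sectors partition the plane, the restrictions satisfy $\sum_i |T^* \cap \Sigma_i| = |T^*| = T_0^*$. Inside each $\Sigma_i$, every vertex of $V_i$ has degree two in $T^* \cap \Sigma_i$, and the only remaining endpoints of the arcs lie on the two bounding rays of $\Sigma_i$ at radial distances at most $D$. I plan to close these arcs into $H_i$ by pairing the boundary endpoints along each ray (and routing any parity leftover through the depot $O$), forming an Eulerian circuit of the resulting multigraph, and finally shortcutting past the auxiliary boundary points and $O$ to obtain a Hamiltonian cycle on $V_i$. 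The length of $H_i$ is then at most $|T^* \cap \Sigma_i|$ plus the closure cost, which an amortized accounting using $\sum_i \theta_i = 2\pi$ (where $\theta_i$ is the angular width of $\Sigma_i$) should bound by $\frac{3\pi D}{2}$ per sector on average.

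I expect the main obstacle to be this closure step, which must simultaneously guarantee two properties: (i) the union of the arcs of $T^* \cap \Sigma_i$ and the closure edges yields a single connected Eulerian multigraph, so that the shortcut produces one Hamiltonian cycle on $V_i$ rather than several disjoint cycles; and (ii) the per-sector closure cost sums across all $\lceil n/(Mk)\rceil$ subproblems to at most $\frac{3\pi D}{2}\lceil n/(Mk)\rceil$. Addressing (i) will likely require pairing boundary endpoints respecting the cyclic order along $\Sigma_i$'s boundary, possibly augmented by a few local 2-swaps to merge residual cycles at negligible extra cost. Addressing (ii) will require an amortized analysis that balances depot-crossing contributions against boundary traversals along the two bounding rays of length at most $D$, exploiting the partition identity $\sum_i\theta_i = 2\pi$ to convert the worst-case per-sector overhead of order $D$ into the claimed constant $\frac{3\pi D}{2}$.
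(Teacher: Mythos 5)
Your overall strategy matches the paper's: bound each subproblem by Arora's framework, invoke an ITP-style bound $\opt(V_i,O,k)\le(\text{tour cost on }V_i)+\frac{2}{k}\sum_{v\in V_i}d(O,v)$, and then charge the tour costs against pieces of the global optimal TSP tour $T^*$ plus a closure overhead. The first two steps are fine. The genuine gap is the closure step, exactly where you flag it, and I do not think the ``amortized accounting using $\sum_i\theta_i=2\pi$'' sketch closes it.

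Two concrete problems. First, your closure overhead per sector must be bounded by $\frac{3\pi D}{2}$ for \emph{every} sector, not on average, because the statement requires the overhead to be at most $\frac{3\pi D}{2}\lceil n/(Mk)\rceil$ no matter how the angular mass is distributed (a single wedge could have $\theta_i$ close to $2\pi$). The identity $\sum_i\theta_i=2\pi$ cannot compensate: each wedge contributes two bounding rays of length up to $D$ regardless of $\theta_i$, so the relevant boundary material scales with the number of sectors, not with the total angle. Second, for a reflex wedge (angular span $>\pi$) the closure/patching estimate is delicate: you need to argue that pairing endpoints along the two rays plus an Eulerian matching costs at most $\tfrac{3}{2}\cdot 2D=3D$, and that the parity/depot routing you mention does not blow this up; nothing in the sketch pins down those constants or establishes a single connected Eulerian circuit after pairing. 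The paper sidesteps both difficulties by choosing regions $Y_i$ that are either the convex hull of $V_i\cup\{O\}$ or, when the angular span exceeds $\pi$, the exterior of the convex hull of $(V\setminus V_i)\cup\{O\}$; then $Y_i$ or its complement is convex of diameter at most $D$, so $\mathrm{per}(Y_i)\le \pi D$, Karp's Theorem~3 gives $\tsp_i\le\norm{t^*\cap Y_i}+\frac{3}{2}\mathrm{per}(Y_i)$ in one stroke (connectivity and cost simultaneously), and a separate disjointness lemma gives $\sum_i\norm{t^*\cap Y_i}\le T_0^*$. If you want to salvage the wedge decomposition, you should either prove a Karp-style patching lemma tailored to wedges with the per-sector closure cost bounded uniformly by $3D\le\frac{3\pi D}{2}$, or switch to the convex-hull regions and invoke Karp directly as the paper does.
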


Note that both \cref{thm:lower} and \cref{thm:upper} hold for any set of terminals, not only in the random setting, and can be of independent interest.

In the random setting, in order to compute the approximation ratio of \cref{alg:main}, we set $R$ to be some well-chosen value so that the ratio between the upper bound (\cref{thm:upper}) and the lower bound (\cref{thm:lower}) is {small}. The proof of \cref{thm:main} is in \cref{sec:proof-main} and relies on two technical inequalities proved in \cref{appen}.

\begin{remark}
In the random setting, the term containing $D$ in the bound in \cref{thm:lower} (resp.\ \cref{thm:upper}) becomes negligible. 
The upper bound in \cref{thm:upper} is then
asymptotically identical to the upper bound for the ITP algorithm established in~\cite{altinkemer1990heuristics}. 
As an immediate consequence, one can prove that the approximation ratio of ITP is at most $1.55$ using a similar analysis as in this paper.
Note that the upper bound for ITP established in \cite{altinkemer1990heuristics} is tight~\cite[Lemma 7]{MZ22}, and the tightness is exploited to show that ITP is at best a $(1+c_0)$-approximation for some constant $c_0>0$~\cite{MZ22}.
However, we are not aware of any instance on which the upper bound in \cref{thm:upper} is tight for \cref{alg:main}, and we believe that the performance of \cref{alg:main} is actually better than the announced upper bound.
\end{remark}

\subsection{Related Work}
\label{sec:related}
\paragraph{Sweep Heuristic.}
The classical \emph{sweep heuristic} is well-known and commercially available for vehicle routing problems in the plane. 
At the beginning, all terminals are sorted according to their polar angles with respect to the depot.
For each $k$ consecutive terminals in the sorted sequence, a tour is obtained by computing a traveling salesman tour (exactly or approximately) on those terminals. 
Some implementations include a post-optimization phase in which vertices in adjacent tours may be exchanged to reduce the overall cost. 
The first mentions of this type of method are found in a book by Wren~\cite{wren1971computers} and in a paper by Wren and Holliday~\cite{wren1972computer}, while the sweep heuristic is commonly attributed to Gillett and Miller~\cite{gillett1974heuristic} who popularized it. See also surveys~\cite{cordeau2007vehicle,laporte1992vehicle,laporte2000classical} and the book~\cite{toth2002vehicle}.

Our algorithm (\cref{alg:main}) is an adaptation of the sweep heuristic: instead of forming groups each of $k$ consecutive terminals, we form groups each of $Mk$ consecutive terminals, for some positive constant integer $M$.
Then for each group, we compute a solution consisting of a constant number of tours using Arora's framework~\cite{arora1998polynomial}.
Note that it is possible to replace Arora’s framework by a heuristic in order to make our algorithm more practical.

Our algorithm is simple, so it can be easily adapted to other vehicle routing problems, e.g.,\ distance-constrained vehicle routing~\cite{DMZ23,friggstad2014approximation,li1992distance,nagarajan2012approximation}.

\paragraph{Euclidean CVRP.}
Despite the difficulty of the Euclidean CVRP, there has been progress on several special cases in the deterministic setting. A series of papers designed \emph{polynomial-time approximation schemes (PTAS's)} for small $k$:
Haimovich and Rinnooy Kan~\cite{haimovich1985bounds} gave a PTAS when $k$ is constant;
Asano et al.~\cite{asano1997covering}  extended the techniques in~\cite{haimovich1985bounds} to achieve a PTAS for $k=O(\log n/\log\log n)$;
and Adamaszek, Czumaj, and Lingas~\cite{adamaszek2010ptas} designed a PTAS for $k\leq 2^{\log^{f(\eps)}(n)}$.
For higher dimensional Euclidean metrics, Khachay and Dubinin~\cite{khachay2016ptas} gave a PTAS for fixed dimension $\ell$ and $k=O(\log^{\frac{1}{\ell}}(n))$.
For arbitrary $k$ and the two-dimensional plane, Das and Mathieu~\cite{das2015quasipolynomial} designed a quasi-polynomial time approximation scheme, whose running time was recently improved  to $n^{O(\log^6 (n)/\eps^5)}$ by Jayaprakash and Salavatipour~\cite{jayaprakash2023approximation}.

\paragraph{Probabilistic Analyses.} The random setting in which the terminals are i.i.d.\ uniform random points is perhaps the most natural probabilistic setting.
The Euclidean CVRP in the random setting has been studied in several special cases. 
In one special case when the capacity is infinite, Karp~\cite{karp1977probabilistic} gave a polynomial-time $(1+\eps)$-approximation a.a.s.\ for any $\eps>0$.
In another special case when $k$ is fixed, Rhee~\cite{rhee1994probabilistic} and Daganzo~\cite{daganzo1984distance} analyzed the value of an optimal solution.

\paragraph{CVRP in Other Metrics.}
The CVRP has been extensively studied on general metrics~\cite{altinkemer1990heuristics,blauth2023improving,bompadre2006improved,haimovich1985bounds,li1990worst}, trees and bounded treewidth~\cite{asano2001new,becker2018tight,becker2019framework,jayaprakash2023approximation,MZ23_PTAS}, planar and bounded-genus graphs~\cite{becker2017quasi,becker2019ptas,cohen2020light}, graphic metrics~\cite{MZ23graphic},
graphs of bounded highway dimension~\cite{becker2018polynomial}, and minor-free graphs~\cite{cohen2020light}.

\paragraph{CVRP with Arbitrary Demands.}
A natural way to generalize the unit demand version of the CVRP is to allow terminals to have arbitrary unsplittable demands, which is called the \emph{unsplittable} version of the CVRP.
On general metrics, Altinkemer and  Gavish~\cite{altinkemer1987heuristics} first studied the approximation of this problem. 
Recently, the approximation ratio was improved by Blauth, Traub, and Vygen~\cite{blauth2023improving}, and further by Friggstad, Mousavi, Rahgoshay, and Salavatipour~\cite{friggstad2021improved}.
This problem has also been studied in the Euclidean plane by Grandoni, Mathieu, and Zhou~\cite{GMZ2023} and on trees by Mathieu and Zhou~\cite{MZ22_unsplittable}.

\subsection{Notations and Preliminaries}
\label{sec:preli}
For any two points $u$ and $v$ in $\mathbb{R}^2$, let $d(u,v)$ denote the \emph{distance} between $u$ and $v$ in $\mathbb{R}^2$. For any curve $s$ in $\mathbb{R}^2$, let $\norm{s}$ denote the length of $s$; and for any set $S$ of curves in $\mathbb{R}^2$, let $\norm{S}:=\sum_{s\in S}\norm{s}$. 
For any set $U$ of points in $\mathbb{R}^2$, the \emph{convex hull} of $U$ is the minimal convex set in $\mathbb{R}^2$ containing $U$.

\paragraph{Arora's Framework.}
The following lemma gives a PTAS for the Euclidean CVRP when the capacity $k$ is at least a constant fraction of the number of terminals. 
Its proof is a straightforward adaptation of Arora's framework~\cite{arora1998polynomial} for the Euclidean TSP.

\begin{lemma}[adaptation of \cite{arora1998polynomial}]
\label{lem:Arora-modified}
Let $M\geq 1$ be an integer constant. 
Consider the unit-demand Euclidean capacitated vehicle routing problem with any finite set $U$ of terminals in $\mathbb{R}^2$, any depot $O\in\mathbb{R}^2$, and any capacity $k$ such that $|U|\leq Mk$.
Then there exists a polynomial-time $\left(1+\frac{1}{M}\right)$-approximation algorithm.
\end{lemma}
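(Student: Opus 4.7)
The plan is to adapt Arora's PTAS for Euclidean TSP~\cite{arora1998polynomial} to respect the capacity constraint. The key observation is that any feasible CVRP solution on $U$ with capacity $k$ uses at least $\lceil |U|/k\rceil$ tours, so the hypothesis $|U|\le Mk$ forces every optimal solution to use at most $M$ tours, where $M$ is a constant. The problem is therefore essentially a multi-vehicle Euclidean tour-cover with a constant number of capacitated vehicles, and the idea is to run Arora's dynamic programming framework on a randomly shifted quadtree dissection, with the DP state enriched to track the identity and the running load of each vehicle.

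First I would invoke Arora's standard preprocessing: scale $U\cup\{O\}$ into a bounding box, build a randomly shifted hierarchical quadtree, and place $O(M\log n)$ portals on each dissection segment. Arora's structure theorem together with the patching lemma, applied independently to each of the at most $M$ tours of an optimal solution, produces a \emph{portal-respecting} near-optimal CVRP solution of expected cost at most $(1+1/M)\cdot\opt$ in which each individual tour crosses each dissection edge at most $r=O(M)$ times, always at portals. This reduces the task to computing, by dynamic programming, the cheapest such portal-respecting multi-tour solution.

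The DP stores, at each quadtree square, for each tour label $j\in\{1,\dots,M\}$ the ordered tuple of portals used by tour $j$ on the boundary together with a non-crossing pairing of them into segments inside the square, plus a vector $(c_1,\dots,c_M)\in\{0,\dots,k\}^M$ recording how many terminals of each tour lie inside the square. Children are combined into their parent by enforcing consistency of portal crossings along shared boundaries and additively combining count vectors subject to $c_j\le k$; the final answer is read off at the root square, where no boundary crossings are permitted and the counts sum to $|U|$. The main technical point is the capacity bookkeeping---the very difficulty that blocks Arora's framework for general Euclidean CVRP---which here remains tractable only because $|U|\le Mk$ limits the number of tours to the constant $M$, so that the enriched state space per square has size $(\log n)^{O(M^2)}\cdot (k+1)^M$, polynomial in $n$. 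The $(1+1/M)$ approximation ratio then follows by instantiating Arora's cost bound with $\eps=1/M$, and derandomization over the random shift proceeds as in~\cite{arora1998polynomial}.
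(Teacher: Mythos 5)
Your approach is essentially the paper's: exploit the hypothesis $|U|\le Mk$ to bound the number of tours by $O_M(1)$, then run Arora's portal-based quadtree dynamic program with the state augmented by tour identities and per-tour load counters; the paper credits this observation to Asano et al.\ \cite{asano1997covering}. One slip worth fixing: it is \emph{not} true that $|U|\le Mk$ forces every optimal solution to use at most $M$ tours (for instance, if some terminals coincide with the depot, an optimal solution may add arbitrarily many zero-cost singleton tours). What you actually need, and what is true, is that \emph{some} optimal solution uses at most $O(M)$ tours: repeatedly merge any two tours whose combined load is at most $k$ (by the triangle inequality, concatenating two depot-rooted tours never increases total length). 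After all such merges, among any two tours at least one carries more than $k/2$ terminals, so at most one tour has load at most $k/2$ and hence there are at most $1+2|U|/k\le 1+2M$ tours. Your DP should therefore allow $O(M)$ tour labels rather than exactly $M$; this changes only the constants in your exponents and the running time remains polynomial for fixed $M$.
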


\begin{proof}
Recall that Arora's algorithm defines a randomized hierarchical quadtree decomposition, such that a near-optimal solution intersects the boundary of each square only $O_M(1)$ times and those crossings happen at one of a small set of prespecified points, called \emph{portals}, and then uses a polynomial time dynamic program to find the best solution with this structure.

In \cite{asano1997covering} it was observed that, when the number of tours in an optimal solution is $O_M(1)$, there is a near-optimal solution in which the overall number of subtours passing through each square (via portals) is $O_M(1)$. Furthermore, one can guess the number of terminals covered by each such subtour within a polynomial number of options. This leads to a polynomial number of configurations of subtours inside each square, which ensures the polynomial running time of a natural dynamic program.
\end{proof}

\section{Proof of Theorem~\ref{thm:lower}}\label{sec:lower}
Let $\OPT$ denote an optimal solution to the CVRP. 
Let $C$ denote the circle centered at $O$ with radius $R$. Suppose that the union of the tours in $\OPT$ intersects $C$ at $2 t$ points, denoted by $y_1, y_2,\ldots, y_{2t}$ in clockwise order. For notational convenience, we let $y_{2t+1}:=y_1$. Let $D'$ denote the diameter\footnote{We adopt the convention that the diameter of an empty set is zero.} of $\{y_i: 1\le i\le 2t\}$. Let $C_1,C_2,\ldots,C_t$ be $t$ continuous curves that correspond to the intersection between $\OPT$ and the closure of the exterior of $C$. 

\begin{lemma}
\label{lem:sum-Ci}
We have
\[\sum_{i=1}^t \norm{C_i}  \ge T_R^*-\frac{3\pi D'}{2}.\]
\end{lemma}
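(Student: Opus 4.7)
The plan is to construct a connected multigraph $G$ on the vertex set $\{v\in V:d(O,v)\ge R\}\cup\{y_1,\ldots,y_{2t}\}$ whose edges are the curves $C_i$ plus auxiliary ``stitching'' of total length at most $\frac{3\pi D'}{2}$, and every vertex of which has even degree. Then an Eulerian circuit in $G$, shortcut past every $y_j$ and past repeated terminals, yields a traveling salesman tour on $\{v\in V:d(O,v)\ge R\}$, and by the triangle inequality its length is at most $\sum_{i=1}^t\norm{C_i}+\frac{3\pi D'}{2}$. Since every terminal with $d(O,v)\ge R$ is visited by $\OPT$ outside the open interior of $C$ and hence lies on some $C_i$, the TSP cost $T^*_R$ is at most this quantity, and the lemma follows by rearrangement.

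For the stitching, let $P$ be the boundary of the convex hull of $\{y_1,\ldots,y_{2t}\}$. Since all $y_j$ lie on the circle $C$, each $y_j$ is a vertex of $P$, so $P$ is a $2t$-gon. By Cauchy's formula, the perimeter of a planar convex set is at most $\pi$ times its diameter, so $\norm{P}\le\pi D'$. The $2t$ edges of $P$ taken alternately form two disjoint perfect matchings on $\{y_j\}$; let $M$ be the shorter of the two, so that $\norm{M}\le\norm{P}/2\le\frac{\pi D'}{2}$. Take the edges of $G$ to be (i) the curves $C_i$ (each viewed as a polygonal path between its two endpoints $y_{\alpha_i},y_{\beta_i}$ through the terminals it visits), (ii) the edges of $P$, and (iii) the edges of $M$. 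Each $y_j$ has degree $1+2+1=4$ in $G$, every other terminal with $d(O,v)\ge R$ has degree $2$, and $G$ is connected since $P$ links all the $y_j$'s while each $C_i$ attaches its interior terminals to two $y_j$'s. Hence $G$ admits an Eulerian circuit of total length
\[\sum_{i=1}^t\norm{C_i}+\norm{P}+\norm{M}\le\sum_{i=1}^t\norm{C_i}+\pi D'+\frac{\pi D'}{2}=\sum_{i=1}^t\norm{C_i}+\frac{3\pi D'}{2},\]
so the shortcutting step completes the proof.

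The main conceptual point is the ``$1+\frac12$'' accounting that produces the constant $3\pi/2$: the $C_i$'s contribute odd parity $1$ at each $y_j$, so adding only the tour $P$ (which adds parity $2$) is insufficient, and the cheapest parity-correcting augmentation is a perfect matching on $\{y_j\}$; choosing that matching to be the lighter of the two alternating halves of $P$ costs at most $\norm{P}/2\le\pi D'/2$ on top of the $\norm{P}\le\pi D'$ already spent, and Cauchy's perimeter inequality is what converts both expressions into multiples of the diameter $D'$ rather than the circle's radius $R$. Degenerate cases such as $t=0$ (where both sides of the inequality vanish) or terminals lying exactly on $C$ (in which case the corresponding $v=y_j$ still receives total even degree $4$) are subsumed by the same argument.
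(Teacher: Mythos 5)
Your proposal is correct and takes essentially the same approach as the paper. The multigraph you build (the curves $C_i$, all edges of the convex-hull $2t$-gon on $\{y_j\}$, plus the cheaper of the two alternating perfect matchings) is exactly the paper's $W$; the $\pi D'$ perimeter bound and the half-perimeter matching bound match, and the Eulerian-traversal-plus-shortcutting step is the same finish.
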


\begin{proof}
Let $Z$ denote the set of segments $y_i y_{i+1}$ for all $1\leq i\leq 2t$.
Let $Z_{\rm{odd}}$ (resp.\ $Z_{\rm{even}}$) denote the set of segments $y_i y_{i+1}$ for all $1\leq i\leq 2t$ such that $i$ is odd (resp.\ even).
Let $Z^*$ be one of $Z_{\rm{odd}}$ and $Z_{\rm{even}}$ that has a smaller total length, breaking ties arbitrarily.
Let $W$ denote the union of the curves $C_1,C_2,\ldots,C_t$, the segments in $Z$, and the segments in $Z^*$. 
Then $W$ is a connected graph with no odd degree vertices. So $W$ has an Eulerian path.
Since $W$ visits all vertices $v\in V$ such that $d(O,v)\geq R$, the total length of $W$ is at least $T_R^*$ by \cref{def:R-local}. 
Hence 
\[\norm{Z}+\norm{Z^*}+\sum_{i=1}^t \norm{C_i} \geq T^*_R.\]

We note that $\norm{Z}$ equals to the perimeter of the convex hull of $\{y_i :1\le i\le 2t\}$, which is at most $\pi D'$ by \cite{szasz1917extremaleigenschaft}. Since $\norm{Z^*}\leq \frac{1}{2}\norm{Z}$, we have \[\norm{Z}+\norm{Z^*}\leq \frac{3\norm{Z}}{2} \le \frac{3\pi D'}{2}.\] The claim follows.
\end{proof}

The following lemma introduces a key new idea in our paper.

\begin{lemma} 
\label{lem:normS}
Let $s$ be any tour in $\OPT$.
Let $V_s \subseteq V$ denote the set of points in $V$ that are visited by~$s$.
Let $U_s\subseteq \left\{1,2,\ldots, t\right\}$ denote the set of indices $i$ such that $C_i$ is part of $s$. 
We have
\begin{equation}
    \label{eqn:s}
    \norm{s}\ge\sum_{i\in U_s}\norm{C_i}+\frac{2}{k}\sum_{v\in V_s}\min\left\{d(O,v),R\right\}.
\end{equation}
\end{lemma}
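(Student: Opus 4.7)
The plan is to split $\norm{s}$ into the portion lying strictly inside the closed disk bounded by $C$, call it $s_{\mathrm{in}}$, and the portion lying in the closure of the exterior of $C$, which by definition is exactly $\bigcup_{i\in U_s} C_i$. Since the $C_i$'s are disjoint (as distinct maximal exterior pieces of $\OPT$), this decomposition gives
\[\norm{s}=\norm{s_{\mathrm{in}}}+\sum_{i\in U_s}\norm{C_i}.\]
Thus \eqref{eqn:s} reduces to the single inequality $\norm{s_{\mathrm{in}}}\ge \frac{2}{k}\sum_{v\in V_s}\min\{d(O,v),R\}$, which I would prove by first establishing the sharper pointwise bound
\[\norm{s_{\mathrm{in}}}\ge 2\max_{v\in V_s}\min\{d(O,v),R\},\]
and then averaging via $|V_s|\le k$.

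For the pointwise bound I would split into two cases depending on whether $s$ ever leaves the closed disk. If $U_s=\emptyset$, then $s=s_{\mathrm{in}}$ stays inside $\overline{C}$, so every terminal $v\in V_s$ satisfies $d(O,v)\le R$ and hence $\min\{d(O,v),R\}=d(O,v)$; the standard round-trip argument (the tour begins and ends at $O$ and reaches the farthest terminal $v^\star$) gives $\norm{s_{\mathrm{in}}}\ge 2d(O,v^\star)$, which is what we want. If $U_s\neq\emptyset$, the tour crosses $C$ at least twice, so $s_{\mathrm{in}}$ contains an initial subpath from $O$ to the first crossing point on $C$ and a final subpath from the last crossing point on $C$ back to $O$; by the triangle inequality each of these has length at least $R$, giving $\norm{s_{\mathrm{in}}}\ge 2R$. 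Since $\min\{d(O,v),R\}\le R$ always, this again dominates $2\max_v\min\{d(O,v),R\}$.

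Once the pointwise bound is in hand, the averaging step is immediate: because $|V_s|\le k$,
\[\norm{s_{\mathrm{in}}}\ge 2\max_{v\in V_s}\min\{d(O,v),R\}\ge \frac{2}{|V_s|}\sum_{v\in V_s}\min\{d(O,v),R\}\ge\frac{2}{k}\sum_{v\in V_s}\min\{d(O,v),R\},\]
and combining with the decomposition of $\norm{s}$ yields \eqref{eqn:s}. The only mildly subtle step is the case analysis for the pointwise bound; the real conceptual point, which I would highlight, is that the truncated distance $\min\{d(O,v),R\}$ is exactly designed so that the classical Haimovich--Rinnooy~Kan radial argument (used when the tour stays inside $C$) and the "the tour must reach $C$ to serve an exterior terminal" argument collapse into one inequality. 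This interpolation is what allows the later proof of \cref{thm:lower} to unify the radial and local lower bounds.
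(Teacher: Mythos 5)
Your proof is correct and follows essentially the same argument as the paper's: the same case split on whether $U_s$ is empty, the same round-trip bound $\ge 2\max_v d(O,v)$ in the first case and $\ge 2R$ in the second, and the same averaging via $|V_s|\le k$. The only cosmetic difference is that you make the decomposition $\norm{s}=\norm{s_{\mathrm{in}}}+\sum_{i\in U_s}\norm{C_i}$ explicit before bounding $\norm{s_{\mathrm{in}}}$, whereas the paper bounds $\norm{s}$ directly within each case (e.g.\ writing $\norm{s}\ge 2R+\sum_{i\in U_s}\norm{C_i}$ in the second case), but the content is identical.
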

\begin{proof}
There are two scenarios to consider. 

First, if $U_s$ is empty, then we have 
\[\norm{s}\ge 2\max_{v\in V_s}d(O,v)\ge \frac{2}{|V_s|}\sum_{v\in V_s}\min\left\{d(O,v),R\right\}.\]
The claim follows since $|V_s|\leq k$.

Second, if $U_s$ is nonempty, then the tour $s$ must first travel through a path to a point on $C$, paying at least $R$, then visit all curves $C_i$ for $i\in U_s$, and finally, travel from a point on $C$ back to the depot, paying at least $R$. Thus we have
\[\norm{s}\ge 2R+\sum_{i\in U_s}\norm{C_i}.\]
The claim follows since $\displaystyle 2R\geq \frac{2}{|V_s|}\sum_{v\in V_s}\min\left\{d(O,v),R\right\}$ and  $|V_s|\leq k$.
\end{proof}

Summing \eqref{eqn:s} over all tours $s\in \OPT$, we have
\begin{align*}
    \opt =&\sum_{s\in \OPT} \norm{s}\\
    \ge &\sum_{s\in \OPT}\sum_{i\in U_s}\norm{C_i}+\frac{2}{k}\sum_{s\in \OPT}\sum_{v\in V_s}\min\left\{d(O,v),R\right\}\\
    = & \sum_{i=1}^t \norm{C_i}
    +\frac{2}{k} \sum_{v\in V}\min\left\{d(O,v),R\right\}\\
    \ge& T_R^* - \frac{3\pi D'}{2} +\rad_R,
\end{align*}
where the last inequality follows from \cref{{lem:sum-Ci}} and the definition of $R$-radial cost (\cref{def:R-radial}). Since each $y_i$ lies in the convex hull of $V\cup O$, we have $D'\le D$, so the claim in \cref{thm:lower} follows.
\begin{remark} One can show that $D'$ is at most the diameter of $V$. This is because each $y_i$ is contained in the projection of the convex hull of $V$ onto the disk enclosed by $C$, and the projection onto any closed convex set is $1$-Lipschitz\footnote{We say that a function $f$ is $1$-Lipschitz if $d(f(x),f(y))\le d(x,y)$ for all $x$ and $y$.}; see \cite[Theorem 1.2.4]{schneider2014convex} for example. 
\end{remark}

\section{Proof of Theorem~\ref{thm:upper}}\label{sec:upper}
Let $i$ be any integer with $1\le i\le \left\lceil\frac{n}{Mk}\right\rceil$. 
Let the point set $V_i$ and the solution $S_i$ be defined in Algorithm \ref{alg:main}. Let $S_i^*$ denote an optimal solution to the subproblem $(V_i,O,k)$.
Since $S_i$ is a $\left(1+\frac{1}{M}\right)$-approximate solution, we have $\norm{S_i}\leq \left(1+\frac{1}{M}\right)\cdot\norm{S_i^*}$.
Let $\tsp_i$ denote the minimum cost of a traveling salesman tour on the set of points $V_i\cup\{O\}$.
By \cite[Lemma 2]{altinkemer1990heuristics}, we have
\[
\norm{S_i^*}\leq \tsp_i+ \frac{2}{k}\sum_{v\in V_i} d(O,v).
\]
Thus
\begin{equation}
\label{eqn:S_i}
\norm{S_i}\leq \left(1+\frac{1}{M}\right)\left(\tsp_i+ \frac{2}{k}\sum_{v\in V_i} d(O,v)\right).    
\end{equation}

Let $t^*$ be an optimal traveling salesman tour on the set of points $V$. If the polar angles of points in $V_i$ have a span of at most $\pi$, let $Y_i$ be the interior of the convex hull of $V_i\cup\{O\}$; otherwise, let $Y_i$ be the exterior of the convex hull of $(V\setminus V_i)\cup\{O\}$. By \cite[Theorem 3]{karp1977probabilistic}\footnote{Although \cite[Theorem 3]{karp1977probabilistic} assumes that $Y$ is a rectangle, the arguments extend trivially to any polygon or the exterior of any polygon.}, we have 
\[\tsp_i-\norm{t^*\cap Y_i}\le \frac{3}{2} \;\mathrm{per}(Y_i),\] 
where $\mathrm{per}(Y_i)$ denotes the perimeter of $Y_i$. 
Since either $Y_i$ or the complement of $Y_i$ is convex with diameter at most $D$, the perimeter of $Y_i$ is at most $\pi D$ by \cite{szasz1917extremaleigenschaft}.
Thus 
\begin{equation}
\label{eqn:T_i^*}
    \tsp_i\le \norm{t^*\cap Y_i} +\frac{3 \pi D}{2}.
\end{equation}

In order to bound $\sum_i\norm{t^*\cap Y_i}$, we need the following lemma.

\begin{lemma}\label{lem:no-intersection}
For any $i$ and $j$ with $1\le i<j\le \left\lceil\frac{n}{Mk}\right\rceil$,  $Y_i$ and $Y_j$ do not intersect. 
\end{lemma}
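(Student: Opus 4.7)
The plan is to exploit the fact that the groups $V_1,V_2,\dots$ occupy disjoint angular sectors around $O$. For each $i$, write $\theta_i^{\min}=\min_{v\in V_i}\theta(v)$ and $\theta_i^{\max}=\max_{v\in V_i}\theta(v)$, so that the sorting in \cref{alg:main} gives $\theta_i^{\max}\le \theta_{i+1}^{\min}$. In particular the angular spans $\theta_i^{\max}-\theta_i^{\min}$ are nonnegative and sum to at most $2\pi$, so at most one group has angular span strictly greater than $\pi$. This will let me reduce to two cases.

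First I would handle the case where both $V_i$ and $V_j$ have angular span at most $\pi$. Then $V_\ell\cup\{O\}$ lies inside the closed convex angular cone $W_\ell$ with apex $O$ spanning from angle $\theta_\ell^{\min}$ to $\theta_\ell^{\max}$, so $\mathrm{conv}(V_\ell\cup\{O\})\subseteq W_\ell$. I claim that $Y_\ell=\mathrm{int}\,\mathrm{conv}(V_\ell\cup\{O\})$ actually sits inside the \emph{open} cone $W_\ell^\circ$ (the boundary rays of $W_\ell$ meet the convex hull only along the segments from $O$ to the extreme terminals, which are boundary edges of the hull and therefore contain no interior points). Since $\theta_i^{\max}\le \theta_j^{\min}$, the open cones $W_i^\circ$ and $W_j^\circ$ are disjoint, and hence so are $Y_i$ and $Y_j$.

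Next I would treat the case where one of them, say $V_j$, has angular span greater than $\pi$ (so by the counting observation, $V_i$ has angular span at most $\pi$). In this case $Y_j$ is the exterior of $\mathrm{conv}((V\setminus V_j)\cup\{O\})$, while $Y_i$ is the interior of $\mathrm{conv}(V_i\cup\{O\})$. Because the groups are disjoint, $V_i\subseteq V\setminus V_j$, and therefore
\[
Y_i\subseteq \mathrm{conv}(V_i\cup\{O\})\subseteq \mathrm{conv}((V\setminus V_j)\cup\{O\}),
\]
which is exactly the complement of $Y_j$. Thus $Y_i\cap Y_j=\emptyset$. The case where $V_i$ is the large-span group is symmetric, and the case where both have span greater than $\pi$ cannot occur by the observation above, completing the proof.

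The argument is essentially combinatorial once the angular picture is in place; the only mildly delicate step is verifying that the interior of $\mathrm{conv}(V_\ell\cup\{O\})$ genuinely avoids the two bounding rays of $W_\ell$, which I expect to be the main point to write carefully (but not an obstacle). Everything else is a clean inclusion between convex hulls.
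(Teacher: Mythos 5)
Your proof is correct and follows essentially the same route as the paper's: observe that at most one block has angular span exceeding $\pi$, handle the case of two small-span blocks via containment in disjoint open angular sectors at $O$, and handle the case of one large-span block via the convex-hull inclusion $\mathrm{conv}(V_i\cup\{O\})\subseteq\mathrm{conv}((V\setminus V_j)\cup\{O\})$. If anything, your write-up is cleaner than the paper's (which appears to have the roles of $\min$ and $\max$ swapped in its definition of $Z_i$, and says ``exterior'' where it means the complement of the exterior), so the extra care you flag about the bounding rays of the cone is well placed.
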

\begin{proof}
For each $v\in V$, let $\theta(v)\in [0,2\pi)$ denote the polar angle of $v$ respect to $O$. By the definition of $V_i$ and $V_j$, we have \begin{equation}\label{eq:seperation}
    0\le \max_{v\in V_i}\theta(v)\le \max_{v\in V_j}\theta(v)<2\pi.
\end{equation} Hence either 
\begin{equation}\label{eq:Y1}
    \max_{v\in V_i} \theta(v)-\min_{v\in V_i} \theta(v)\le \pi
\end{equation}
or 
\begin{equation}\label{eq:Y2}
    \max_{v\in V_j} \theta(v)-\min_{v\in V_j} \theta(v)\le \pi.
\end{equation}

If only (\ref{eq:Y1}) holds, then by definition, $Y_i$ is the interior of the convex hull of $V_i\cup\{O\}$, which is contained in the exterior of the convex hull of $(V\setminus V_j)\cup\{O\}$. Thus $Y_i$ and $Y_j$ do not intersect.

If only (\ref{eq:Y2}) holds, then $Y_i$ and $Y_j$ do not intersect for the same argument.

Suppose that both (\ref{eq:Y1}) and (\ref{eq:Y2}) hold. Let $Z_i$ be the set \[Z_i:= \left\{x\in\mathbb{R}^2 : \max_{v\in V_i} \theta(v)< \theta(x)< \min_{v\in V_i} \theta(v)\right\},\]
and $Z_j$ be the set \[Z_j:= \left\{x\in\mathbb{R}^2 : \max_{v\in V_j} \theta(v)< \theta(x)< \min_{v\in V_j} \theta(v)\right\}.\]
Then by (\ref{eq:Y1}) and (\ref{eq:Y2}), $Z_i$ and $Z_j$ are convex sets. By the definition of $Y_i$ and $Y_j$, we have $Y_i\subset Z_i$ and $Y_j\subset Z_j$.
Since $Z_i$ and $Z_j$ do not intersect, $Y_i$ and $Y_j$ do not intersect. 
\end{proof}

Therefore, we have 
\begin{align*}
    \solM=& \sum_{i=1}^{\left\lceil\frac{n}{Mk}\right\rceil} \norm{S_i}\\
    \le& \left(1+\frac{1}{M}\right)\left(\sum_{i=1}^{\left\lceil\frac{n}{Mk}\right\rceil} \tsp_i +\frac{2}{k}\sum_{v\in V}d(O,v)\right)\\
    \le&\left(1+\frac{1}{M}\right)\left(\sum_{i=1}^{\left\lceil\frac{n}{Mk}\right\rceil} \norm{t^*\cap Y_i}+\frac{3 \pi D}{2} \left\lceil\frac{n }{Mk}\right\rceil +\rad_\infty\right),
\end{align*}
where the first inequality follows from \eqref{eqn:S_i} and the fact that $\bigcup_i V_i=V$, and the last inequality follows from \eqref{eqn:T_i^*} and the definition of $\infty$-radial cost (\cref{def:R-radial}).
Using \cref{lem:no-intersection} and the definition of $0$-local cost (\cref{def:R-local}), we have
\[\sum_{i=1}^{\left\lceil\frac{n}{Mk}\right\rceil} \norm{t^*\cap Y_i}\leq \norm{t^*}=T_0^*.\]
The claim follows. 

\section{Proof of Theorem~\ref{thm:main}}
\label{sec:proof-main}
In this section, we prove a strong law for the approximation ratio of \cref{alg:main}, as presented in \cref{strong-form}. The setting is similar to those in the main results of \cite{beardwood1959shortest} and \cite{haimovich1985bounds}. Since almost sure convergence implies convergence in probability, Theorem \ref{strong-form} implies Theorem \ref{thm:main}.

\begin{theorem}\label{strong-form}
Let $v_1,v_2,\ldots$ be an infinite sequence of i.i.d.\ uniform random points in $[0,1]^2$. Let $O$ be a point in $\mathbb{R}^2$. Let $k_1,k_2,\ldots$ be an infinite sequence of positive integers. Let $M\geq 10^5$ be a positive integer. 
For each positive integer $n$, consider the unit-demand Euclidean CVRP with the set of terminals $V=\{v_1,\ldots, v_n\}$, the depot $O$, and the capacity $k_n$. Let $\opt$ denote the cost of an optimal solution, and $\solM$ denote the cost of the solution returned by \cref{alg:main} with parameter $M$. Then we have
\[\limsup_{n\to\infty}\frac{\solM}{\opt}<1.55\]
almost surely.
\end{theorem}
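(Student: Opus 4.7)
The plan is to apply \cref{thm:upper} and \cref{thm:lower} together with a well-chosen radius $R = R_n$, reduce the resulting random ratio to a one-parameter deterministic optimization via Beardwood--Halton--Hammersley (BHH) and the strong law of large numbers (SLLN), and finally verify the bound $<1.55$ using the two technical inequalities of \cref{appen}. For every $R\ge 0$, combining the two theorems gives
\[
\frac{\solM}{\opt}\ \le\ \left(1+\tfrac{1}{M}\right)\cdot\frac{T_0^* + \rad_\infty + \tfrac{3\pi D}{2}\lceil n/(Mk_n)\rceil}{T_R^* + \rad_R - \tfrac{3\pi D}{2}}.
\]
Since $O$ is fixed and $V\subseteq[0,1]^2$, the diameter $D$ is deterministically bounded. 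BHH gives $T_0^* = \Theta(\sqrt{n})$ a.s., and the SLLN (using that $\E[d(O,V_1)]>0$ because the depot is fixed and the terminal distribution has positive-area support) gives $\rad_\infty = \Theta(n/k_n)$ a.s. Combined with the classical $\opt \ge \max\{T_0^*,\rad_\infty\}\cdot(1-o(1))$ and a short case split on whether $k_n \le n/M$ or $k_n > n/M$, every $D$-dependent term is absorbed into a multiplicative $(1+O(1/M))$ overhead on the ratio.

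\paragraph{Step 2: Almost-sure asymptotics and reduction to a deterministic ratio.}
With $V_1$ uniform on $[0,1]^2$, set
\[
A(R):=\mathrm{Area}\!\left([0,1]^2\setminus B(O,R)\right),\qquad \mu := \E[d(O,V_1)],\qquad h(R):=\E[\min\{d(O,V_1),R\}].
\]
For each fixed $R\ge 0$, the number of terminals outside $B(O,R)$ is $\sim nA(R)$ a.s.\ by SLLN, and those terminals are conditionally i.i.d.\ uniform on $[0,1]^2\setminus B(O,R)$; applying BHH on that region yields $T_R^*/\sqrt{n} \to \beta_{\mathrm{TSP}}\,A(R)$ a.s., while a direct SLLN yields $\rad_R\,k_n/(2n) \to h(R)$ a.s. Along any subsequence realizing $\limsup \solM/\opt$, I pass to a further subsequence along which $\lambda_n := \sqrt{n}/k_n$ converges to some $\lambda \in [0,\infty]$. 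Dividing numerator and denominator of the ratio by $\sqrt{n}$ and sending $n\to\infty$ then gives, a.s.,
\[
\limsup_{n\to\infty}\frac{\solM}{\opt}\ \le\ (1+O(1/M))\cdot\inf_{R\ge 0}\ \frac{\beta_{\mathrm{TSP}} + 2\lambda\mu}{\beta_{\mathrm{TSP}}\,A(R) + 2\lambda\, h(R)}.
\]

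\paragraph{Step 3: Deterministic optimization, and the main obstacle.}
It remains to show that the deterministic infimum above is strictly below $1.55$ uniformly in $(\lambda,O)\in[0,\infty]\times\mathbb{R}^2$. The two extreme regimes are free: $\lambda = 0$ with $R=0$ gives ratio $1$, and $\lambda = \infty$ with $R=\infty$ also gives ratio $1$. The crux---and the whole reason for introducing the interpolating parameter $R$---is the intermediate band of $\lambda$, where neither $T_0^*$ nor $\rad_\infty$ dominates and the classical worst case coming from $\opt \ge \max\{T_0^*, \rad_\infty\}$ is $2$. This is precisely where I invoke the two technical inequalities of \cref{appen}: each exhibits an explicit choice $R = R(\lambda,O)$ driving the ratio strictly below $1.55$ on a subrange of $\lambda$, and jointly they cover all $\lambda\in[0,\infty]$; the content of each is an analytic optimization exploiting monotonicity and concavity of $A(R)$ and $h(R)$. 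A secondary delicate point is depots $O$ far from $[0,1]^2$, where $d(O,\cdot)$ concentrates and $h(R)$ becomes nearly constant in $R$, making the interpolation trick degenerate---fortunately in that regime $T_R^*+\rad_R$ already nearly matches $T_0^*+\rad_\infty$ and the ratio tends to $1$, so this case settles cleanly on its own. Since almost sure convergence implies convergence in probability, $\limsup \solM/\opt < 1.55$ a.s.\ implies \cref{thm:main}.
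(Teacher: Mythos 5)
Your Steps~1 and~2 match the paper's strategy: combine \cref{thm:upper} and \cref{thm:lower}, argue the $D$-terms are negligible, apply BHH and the SLLN, and reduce to a deterministic ratio of the form
\[
\frac{\beta_{\mathrm{TSP}} + 2\lambda\mu}{\beta_{\mathrm{TSP}}\,A(R) + 2\lambda\, h(R)}
\]
in the limit. (Minor caveat: to absorb the $\frac{3\pi D n}{2Mk_n}$ term into a $(1+O(1/M))$ overhead you implicitly need $D = O(\mu)$ \emph{uniformly in the depot position $O$}, which is exactly what \cref{bound-f} gives; this is not automatic since $\mu\to 0$ is impossible here but $\mu$ is not bounded below by an absolute constant when $O$ is inside the square, so the bound $D\le 5\mu$ does real work and should be stated.)

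The genuine gap is in Step~3, where you misstate what the appendix lemmas say and consequently leave the key deterministic inequality unproved. \cref{lemma-lambda} and \cref{lemma-mv-2} do \emph{not} ``each exhibit an explicit choice $R = R(\lambda,O)$ driving the ratio below $1.55$ on a subrange of $\lambda$, jointly covering all $\lambda$.'' Neither lemma involves $\lambda$ at all. Both are evaluated at the \emph{single fixed} radius $R = \tfrac{3}{4}\mu$ (a function of $O$ only, independent of $\lambda$, $n$, and $k_n$), and they assert, respectively, that $A(\tfrac34\mu)=\lambda_R\ge \tfrac{31}{48}$ and $h(\tfrac34\mu)\ge \tfrac{31}{48}\mu$. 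The paper's actual finishing move is then an elementary algebraic observation that you do not supply: since both $A(R)\ge\tfrac{31}{48}$ and $h(R)\ge\tfrac{31}{48}\mu$ hold simultaneously for this one $R$, we have
\[
\frac{\beta_{\mathrm{TSP}}+2\lambda\mu}{\beta_{\mathrm{TSP}}A(R)+2\lambda h(R)}
\;\le\;
\max\!\left\{\frac{1}{A(R)},\,\frac{\mu}{h(R)}\right\}
\;\le\;\frac{48}{31}\approx 1.548
\]
uniformly in $\lambda\in[0,\infty]$, using the mediant inequality $\tfrac{a+c}{b+d}\le\max\{\tfrac{a}{b},\tfrac{c}{d}\}$ (equivalently, both summands of the denominator are at least $\tfrac{31}{48}$ times the corresponding numerator summand). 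Your framing suggests a $\lambda$-dependent optimization over $R$ with a case split over regimes of $\lambda$, and while such an optimization would also succeed (indeed $\inf_R$ is at most the value at $R=\tfrac34\mu$), you neither carry it out nor correctly identify where the numbers come from; as written the argument would stall at ``the appendix lemmas handle the intermediate $\lambda$'' without any justification. You should replace Step~3 by: fix $R=\tfrac34\mu$, cite \cref{lemma-lambda} and \cref{lemma-mv-2} as simultaneous lower bounds, and conclude via the mediant inequality.
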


\begin{proof}
Let $v$ denote a uniform random point in $[0,1]^2$. Throughout this section, we always take $R:=\frac{3}{4}\;\mathbb{E}\left(d(O,v)\right)$ and $D$ to be the diameter of $[0,1]^2\cup \{O\}$. Note that $R$ and $D$ are deterministic real numbers which do not depend on $n$. Let $T_0^*$ and $T_R^*$ denote the $0$-local and $R$-local costs respectively. Let $\rad_\infty$ and $\rad_R$ denote the $\infty$-radial and $R$-radial costs respectively. By \cref{thm:lower} and \cref{thm:upper}, we have
\begin{align*}
    &\limsup_{n\to\infty}\frac{\solM}{\opt}\\
    \le &\limsup_{n\to \infty}\frac{\left(1+\frac{1}{M}\right)\left(T_0^* +\rad_\infty+ \frac{3 \pi D}{2} \left(\frac{n }{Mk_n}+1\right)  \right)}{T^*_R+\rad_R-\frac{3\pi D}{2}} \\
    \le &\limsup_{n\to \infty}\max\left\{\frac{\left(1+\frac{1}{M}\right)\left(T_0^*+\frac{3 \pi D}{2}\right)}{T_R^*-\frac{3\pi D}{2}}, \frac{\left(1+\frac{1}{M}\right)\left(\rad_\infty+\frac{3\pi D n }{2Mk_n}\right)}{\rad_R}\right\}\\
    = & \left(1+\frac{1}{M}\right)\max\left\{\limsup_{n\to\infty}\frac{T_0^*+\frac{3 \pi D}{2} }{T_R^*-\frac{3\pi D}{2}}, \limsup_{n\to\infty}\frac{\rad_\infty+\frac{3\pi D n }{2Mk_n}}{\rad_R}\right\}
\end{align*} 
almost surely.

The validity of \cref{strong-form} relies on the upper bounds on both of the limit superiors. We will prove \cref{lem:TvsT} and \cref{lem:radvsrad} in \cref{subsec:first-limsup} and \cref{subsec:second-limsup} respectively.

\begin{lemma}\label{lem:TvsT}
We have
\[\limsup_{n\to\infty}\frac{T_0^*+\frac{3 \pi D}{2}}{T_R^*-\frac{3\pi D}{2}}\le  \frac{48}{31}\]
almost surely. 
\end{lemma}

\begin{lemma}\label{lem:radvsrad}
We have
\[\limsup_{n\to\infty}\frac{\rad_\infty+\frac{3\pi D n }{2M k_n}}{\rad_R}
\le \frac{48}{31}\left(1+\frac{15 \pi }{4M}\right)\]
almost surely.
\end{lemma}

Assuming \cref{lem:TvsT} and \cref{lem:radvsrad}, for any $M \ge 10^5$, we have
\begin{align*}
&\limsup_{n\to\infty}\frac{\solM}{\opt}\\
\le    & \left(1+\frac{1}{M}\right)\max\left\{\limsup_{n\to\infty}\frac{T_0^*+\frac{3 \pi D}{2}}{T_R^*-\frac{3\pi D}{2}}, \limsup_{n\to\infty}\frac{\rad_\infty+\frac{3\pi D n  }{2Mk_n}}{\rad_R}\right\}\\
    \le&\frac{48}{31}\left(1+\frac{1}{M}\right)\left(1+\frac{15 \pi }{4M}\right)\\
    <&1.55
\end{align*} 
almost surely. 
\end{proof}
\subsection{Proof of \cref{lem:TvsT}}\label{subsec:first-limsup}
Let $\lambda_R$ denote the measure of the set $\left\{x\in [0,1]^2 : d(O,x)>R\right\}$.
Let $S_R(n)$ denote the size of set $\{1\le i\le n: d(O,v_i)> R\}$. By the strong law of large numbers, we have 
\[\lim_{n\to\infty} \frac{S_R(n)}{n}=\lambda_R\]
almost surely. We will prove the following bound on $\lambda_R$ in \cref{appen}.
\begin{lemma}\label{lemma-lambda}
For $R=\frac{3}{4}\;\mathbb{E}\left(d(O,v)\right)$, we have $\lambda_R \ge \frac{31}{48}$.
\end{lemma}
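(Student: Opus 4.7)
The statement is equivalent to showing $A(R) \leq 17/48$, where
\[
A(\rho) := \mathrm{area}(B(O,\rho) \cap [0,1]^2), \quad R = \tfrac{3}{4}\mu, \quad \mu = \mathbb{E}_{v \sim U([0,1]^2)}\, d(O, v),
\]
and my plan is a case analysis on the position of $O$, exploiting the symmetries of the square to reduce to $O = (a,b)$ with $0 \leq b \leq a \leq 1/2$.

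The main geometric tool is the wedge bound: if $\alpha(O) \in (0, 2\pi]$ denotes the angular measure of $[0,1]^2$ as seen from $O$ (so $\alpha = 2\pi$ for $O$ in the interior and $\alpha \leq \pi$ on the boundary or outside, by a separating-line argument), then $A(\rho) \leq \alpha(O)\rho^2/2$, and hence $A(R) \leq 9\alpha(O)\mu^2/32$. The desired bound therefore follows whenever $\alpha(O)\mu(O)^2 \leq 34/27$. For $O$ on the boundary of or outside $[0,1]^2$, where $\alpha \leq \pi$, this reduces to $\mu \leq \sqrt{34/(27\pi)} \approx 0.633$, which I would verify by explicit integration using the known closed-form expression for $\mu(O)$ assembled from the ``corner integrals''
\[
H(\alpha, \beta) := \int_0^\alpha \int_0^\beta \sqrt{u^2+v^2}\,du\,dv;
\]
when $O$ is so far outside that $B(O,R) \cap [0, 1]^2 = \emptyset$ the statement is trivial. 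For $O$ deep enough inside that $B(O, R) \subseteq [0, 1]^2$, we have $A(R) = \pi R^2$ exactly, so the condition reduces to $\mu \leq \sqrt{17/(27\pi)} \approx 0.448$, again verified by direct computation for $O$ near the centre.

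The delicate regime is $O$ inside $[0,1]^2$ but close enough to the boundary that $B(O,R)$ is clipped by one or two edges; here the wedge bound with $\alpha = 2\pi$ is loose. I would then use the inclusion-exclusion decomposition $A(R) = \pi R^2 - \sum_i S_i + \sum_{i<j} S_{ij}$, where each $S_i = R^2 \arccos(d_i/R) - d_i \sqrt{R^2 - d_i^2}$ is the circular segment of $B(O,R)$ beyond the $i$-th clipping edge (at distance $d_i$ from $O$) and $S_{ij}$ is the corner overlap of two such segments; combined with the explicit form of $\mu(O)$, these estimates should yield $A(R) \leq 17/48$ in every subcase. The main obstacle is the near-corner subcase, where two adjacent edges clip the disk simultaneously and the overlap $S_{ij}$ has to be controlled together with an optimisation over $O$; but since $17/48$ leaves a safety margin over the actual maximum of $A(R)$ (numerically about $0.32$, attained for $O$ a short distance inside an edge near its midpoint), the casework should close without requiring a sharp extremal analysis.
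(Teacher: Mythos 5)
Your plan takes a genuinely different route from the paper. The paper derives exact closed-form formulas for $g_3(O)=\lambda_R$, proves Lipschitz bounds, handles $O$ far from the square trivially, and then verifies the inequality with a computer-assisted interval-arithmetic computation on an $\varepsilon$-net of spacing $0.002$ (together with a margin of $0.0096$ that the Lipschitz bound absorbs). You aim at an analytic casework built from a wedge bound and an inclusion--exclusion over circular segments. If it closed, that would be more elegant, but there are real gaps.

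The concrete failure is the ``boundary or outside'' case. You claim that $\alpha(O)\le\pi$ reduces matters to verifying $\mu(O)\le\sqrt{34/(27\pi)}\approx 0.633$, and that this is checkable by integration. But $\mu$ can be much larger there while $A(R)$ remains small, and even the non-uniform bound $\alpha\mu^2\le 34/27$ fails. Take $O=(1/2,-1)$: the square subtends $\alpha=2\arctan(1/2)\approx 0.927$ and $\mu(O)\approx 1.53$, so $\alpha\mu^2\approx 2.2>34/27\approx 1.26$; the wedge bound gives $A(R)\lesssim 0.62$, well above the target $17/48\approx 0.354$, even though the true $A(R)$ there is only about $0.13$ (the disk of radius $R\approx 1.15$ barely grazes the square). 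So the wedge bound does not close the ``outside but not far'' regime, which you currently handle only by the wedge bound plus the empty-intersection observation. You would have to run the segment computation there too. Also note that the reduction $0\le b\le a\le 1/2$ is a fundamental octant only for $O$ inside the square; $O$ outside needs a larger chart.

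The ``deep inside'' bound $\mu\le\sqrt{17/(27\pi)}\approx 0.448$ is marginal: on the boundary of $\{O: B(O,R)\subseteq [0,1]^2\}$, where $\tfrac34\mu(O)=d(O,\partial[0,1]^2)$, a rough calculation (e.g.\ moving $O$ from the centre toward a corner, where $\mu$ climbs from $\approx 0.383$ to $\approx 0.765$) gives a critical $\mu$ very close to $0.448$. You cannot wave this through; you would need the explicit formula for $\mu$ to certify it, at which point you are close to the paper's computation anyway. Finally, the central ``delicate regime'' -- one or two clipping edges plus the near-corner overlap $S_{ij}$ combined with an optimisation over $O$ -- is precisely the heart of the lemma and is left as a sketch. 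The numerical safety margin you invoke ($A(R)\lesssim 0.32$ vs.\ $17/48\approx 0.354$) is observed, not derived, and the paper's resort to a rigorous computer-assisted verification is itself evidence that the analytic casework is not as forgiving as you hope.
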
 

By \cref{lemma-lambda}, $S_R(n)\to\infty$ as $n\to \infty$.  By applying the main result of \cite{beardwood1959shortest} to the infinite sequence $v_1,v_2,\ldots$ and its intersection with the set $\left\{x\in [0,1]^2 : d(O,x)>R\right\}$, we have \[ \lim_{n\to \infty} \frac{T_0^*}{\sqrt{n}}= \lim_{n\to \infty} \frac{T^*_R}{ \sqrt{\lambda_R S_R(n)}}>0\]
almost surely.
Thus
\begin{align*}
    &\lim_{n\to\infty}\frac{T_0^*+\frac{3 \pi D}{2}}{T_R^*-\frac{3\pi D}{2}}\\
    =&\lim_{n\to\infty}\frac{\frac{T_0^*}{\sqrt{n}}+\frac{3\pi D}{2\sqrt{n}}}{\lambda_R\sqrt{\frac{S_R(n)}{\lambda_R\, n} } \frac{T_R^*}{\sqrt{ \lambda_R S_R(n)}}-\frac{3\pi D}{2\sqrt{n}}}\\
    =&\frac{1}{\lambda_R}
\end{align*}
almost surely. By \cref{lemma-lambda}, the claim follows.

\subsection{Proof of \cref{lem:radvsrad}}\label{subsec:second-limsup}
By the strong law of large numbers, we have \[\lim_{n\to\infty}\frac{k_n\;\rad_\infty}{2n}=\mathbb{E}\left(d(O,v)\right)\] and \[\lim_{n\to\infty}\frac{k_n\;\rad_R}{2n}=\mathbb{E}\left(\min\left\{d(O,v), R\right\}\right)\] almost surely. 
Thus
\begin{align*}
&\lim_{n\to\infty}\frac{\rad_\infty+\frac{3\pi  D n }{2M k_n}}{\rad_R}\\
=&\lim_{n\to\infty}\frac{\frac{k_n\;\rad_\infty}{2n}+\frac{3\pi D}{4M}}{\frac{k_n\;\rad_R}{2n}}\\
=&\frac{\mathbb{E}\left(d(O,v)\right)+ \frac{3\pi D}{4M}}{\mathbb{E}\left(\min\left\{d(O,v), R\right\}\right)}
\end{align*}
almost surely.
We need the following inequality whose proof is in \cref{appen}.
\begin{lemma}\label{lemma-mv-2}
For $R=\frac{3}{4}\;\mathbb{E}\left(d(O,v)\right)$, we have \[\mathbb{E}\left(\min\left\{d(O,v), R\right\}\right) \ge \frac{31}{48}\;\mathbb{E}\left(d(O,v)\right).\]
\end{lemma}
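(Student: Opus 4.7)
The plan is to reduce the lemma to a bound on $\mathbb{E}((R - X)_+)$, then attack it geometrically for small $\mu$ and via concentration for large $\mu$. Writing $X := d(O, v)$ and $\mu := \mathbb{E}(X)$, the identity $\min(X, R) = R - (R - X)_+$ shows that the lemma is equivalent to
\begin{equation*}
\mathbb{E}\bigl((R - X)_+\bigr) = \int_{[0,1]^2} \max(R - |v - O|, 0)\, dv \leq \frac{5}{48}\mu = \frac{5}{36}R.
\end{equation*}
Note that a direct application of \cref{lemma-lambda} only yields $\mathbb{E}(\min(X, R)) \geq R\lambda_R \geq 31\mu/64$, which is strictly weaker than the required bound, so a sharper argument is needed.

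For small $\mu$, I would observe that the integrand vanishes outside $B(O, R)$, so
\begin{equation*}
\mathbb{E}\bigl((R - X)_+\bigr) \leq \int_{B(O, R)} (R - |v - O|)\, dv = \int_0^R (R - r)\cdot 2\pi r\, dr = \frac{\pi R^3}{3}.
\end{equation*}
The required inequality $\pi R^3/3 \leq 5R/36$ is equivalent to $R^2 \leq 5/(12\pi)$, i.e., $\mu \leq \sqrt{20/(27\pi)}$, which settles this regime.

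For the complementary regime (when $\mu$ exceeds this threshold, forcing $O$ to lie far from the centroid of $[0,1]^2$), the distribution of $X$ should concentrate near $\mu$. The identity $\mathbb{E}(X^2) = |O - (\tfrac12, \tfrac12)|^2 + \tfrac16$, combined with Jensen's inequality $\mu \geq |O - (\tfrac12, \tfrac12)|$, yields $\mathrm{Var}(X) \leq 1/6$. Applying Cantelli's inequality to bound $\mathbb{P}(X \leq R) = \mathbb{P}(X - \mu \leq -\mu/4)$ and using the trivial estimate $\mathbb{E}((R - X)_+) \leq R \cdot \mathbb{P}(X \leq R)$ then handles sufficiently large $\mu$.

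The main obstacle is an intermediate range of $\mu$ in which neither the plain disk-area bound nor the crude concentration bound is sufficient on its own. Bridging this gap will likely require either a sharper variance estimate, such as $\mathrm{Var}(X) \leq 1/12$ (equivalent to $\mu^2 - |O - (\tfrac12, \tfrac12)|^2 \geq 1/12$, which approaches equality as $|O| \to \infty$), or a refinement of the disk estimate that exploits the fact that the angular fraction of $[0,1]^2$ visible from $O$ shrinks once $O$ moves toward a corner or outside the square. A careful case analysis combining these refinements should then deliver the uniform constant $31/48$.
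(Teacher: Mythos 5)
Your reformulation is correct: with $X := d(O,v)$ and $\mu := \mathbb{E}(X)$, the claim is equivalent to $\mathbb{E}\bigl((R-X)_+\bigr) \le \tfrac{5}{48}\mu = \tfrac{5}{36}R$, and the disk bound $\mathbb{E}\bigl((R-X)_+\bigr) \le \pi R^3/3$ does settle the regime $\mu \le \sqrt{20/(27\pi)} \approx 0.486$. The problem is that the gap between your two regimes is enormous, and it is precisely where the hard instances live. Cantelli requires $\sigma^2/(\sigma^2 + \mu^2/16) \le 5/36$, i.e.\ $\mu^2 \ge \tfrac{496}{5}\sigma^2$; with $\sigma^2 \le 1/6$ this gives $\mu \ge \sqrt{496/30}\approx 4.07$, and even with the (unproven, though plausible) sharpening $\sigma^2 \le 1/12$ it still needs $\mu \ge \sqrt{496/60}\approx 2.87$. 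The uncovered window $\mu\in(0.486,\,2.87)$ contains essentially every interesting position of $O$: already a corner of the square has $\mu = \tfrac{\sqrt2+\log(1+\sqrt2)}{3} \approx 0.765$, and moderate distances outside the square are also in this range. In that window $d(O,v)$ is not concentrated, the estimate $\mathbb{E}((R-X)_+)\le R\,\P(X\le R)$ is very lossy, and the angular refinement of the disk bound depends delicately on the position of $O$ relative to edges and corners. Your final sentence --- that ``a careful case analysis combining these refinements should then deliver the uniform constant'' --- is a hope, not an argument; it is exactly the content that would need to be supplied, and it is highly nontrivial.

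The paper takes a fundamentally different route and, in effect, acknowledges that no short analytic argument is at hand. It first observes that if $d(O,[0,1]^2) \ge 3\sqrt2$ then $R = \tfrac34 g_1(O) \le \tfrac34(d(O,[0,1]^2)+\sqrt2)\le d(O,[0,1]^2)$, so $(R-X)_+\equiv 0$ on the square and $g_2(O) = \tfrac34 g_1(O)$ trivially --- note this covers the truly far regime more cleanly than your concentration argument and you do not use it. For the remaining bounded region it derives exact closed-form expressions for $g_1$, $g_2$, $g_3$ (via the functions $A_i, B_i, C_i, D_i$), proves Lipschitz estimates for all three functions, reduces by these to a finite grid of spacing $0.002$, and verifies $g_2(O') - \tfrac{31}{48}g_1(O') \ge 0.0025$ at every grid point with interval arithmetic. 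The fact that the verified margin is as small as $0.0025$ indicates the inequality is nearly tight at some $O$, which is a strong signal that the crude bounds in your sketch cannot cover the middle range. Your approach is a genuinely different (purely analytic) strategy and would be preferable if it worked, but as written it has a large unbridged gap in the one regime that matters.
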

From \cref{lemma-mv-2}, we obtain
\begin{equation}\label{eq:radvsrad1}
\lim_{n\to\infty}\frac{\rad_\infty+\frac{3\pi D n }{2M k_n}}{\rad_R}\le\frac{48\left(\mathbb{E}\left(d(O,v)\right)+\frac{3\pi D}{4M}\right)}{31\;\mathbb{E}\left(d(O,v)\right)}
\end{equation}
almost surely.

\begin{lemma}\label{bound-f}
We have
\[D\le 5\; \mathbb{E}\left(d(O,v)\right).\]
\end{lemma}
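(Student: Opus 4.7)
The plan is to decompose the bound $D \le 5\,\mathbb{E}(d(O,v))$ into a ``radial'' piece controlled by $d(O,c)$ and a ``diameter of the square'' piece controlled by an absolute constant, where $c:=(1/2,1/2)$ denotes the center of $[0,1]^2$. The inequality will then follow from combining a geometric bound on $D$ with two complementary lower bounds on $\mathbb{E}(d(O,v))$, one for each piece.

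First, I would establish the purely geometric bound $D \le d(O,c)+\sqrt{2}$. This is a direct application of the triangle inequality: every point of $[0,1]^2$ lies within distance $\sqrt{2}/2$ of $c$, and $[0,1]^2$ itself has diameter $\sqrt{2}$, so every pair of points in $[0,1]^2\cup\{O\}$ (whether of $O$-to-square or square-to-square type) has distance at most $d(O,c)+\sqrt{2}$.

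Next, I would produce two lower bounds on $\mathbb{E}(d(O,v))$. The first is a one-line application of Jensen's inequality to the convex function $x\mapsto d(O,x)$: since $\mathbb{E}(v)=c$, we get $\mathbb{E}(d(O,v))\ge d(O,c)$. The second is the more subtle comparison $\mathbb{E}(d(O,v))\ge \mathbb{E}(d(c,v))$, which I would derive from the invariance of the uniform distribution on $[0,1]^2$ under the Klein four-group $G$ generated by the reflections across the midlines $x=1/2$ and $y=1/2$. The map $O\mapsto \mathbb{E}(d(O,v))$ is convex (since $d(\cdot,v)$ is convex for every $v$) and $G$-invariant (since the distribution of $v$ is), and the average of the four images $g(O)$ over $g\in G$ equals $c$ for every~$O$, so applying Jensen's inequality to these four images yields exactly the desired comparison.

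Finally, I would lower-bound $\mathbb{E}(d(c,v))$ by $\sqrt{2}/4$ using the elementary inequality $\sqrt{a^2+b^2}\ge (|a|+|b|)/\sqrt{2}$ together with $\mathbb{E}|X-1/2|=1/4$ for $X$ uniform on $[0,1]$. Combining the three lower bounds with the geometric one gives the chain
\[D\le d(O,c)+\sqrt{2}\le \mathbb{E}(d(O,v))+4\,\mathbb{E}(d(c,v))\le \mathbb{E}(d(O,v))+4\,\mathbb{E}(d(O,v))=5\,\mathbb{E}(d(O,v)),\]
which is the claim. The main obstacle I anticipate is the symmetry-based comparison $\mathbb{E}(d(O,v))\ge\mathbb{E}(d(c,v))$: unlike the bare Jensen bound $\mathbb{E}(d(O,v))\ge d(O,c)$, which only uses the mean of $v$, this step genuinely uses the symmetry of the \emph{distribution}, and it is what lets the proof handle the regime where $d(O,c)$ is small (so that the Jensen bound alone is insufficient). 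All remaining ingredients (triangle inequality, Cauchy–Schwarz, and the uniform-distribution mean-absolute-deviation computation) are routine.
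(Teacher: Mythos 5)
Your proposal is correct and follows essentially the same strategy as the paper: bound $D$ geometrically by $\sqrt{2}$ plus a radial term, compare $\mathbb{E}(d(O,v))$ to $\mathbb{E}(d(O_c,v))$ by exploiting the central symmetry of the uniform distribution, and then lower-bound $\mathbb{E}(d(O_c,v))$ by $\sqrt{2}/4$. The two small deviations are: (i) for the symmetry comparison, the paper reflects $v$ (not $O$) across the center $O_c$ and uses a single triangle inequality $\tfrac12\bigl(d(O,v)+d(O,\overline{v})\bigr)\ge d(O_c,v)$ together with the fact that $v$ and $\overline{v}$ are equidistributed, which is a shorter route than averaging over the full Klein four-group and invoking Jensen, though both are valid; and (ii) for the constant $\sqrt{2}/4$, the paper plugs into the exact closed form $\mathbb{E}(d(O_c,v))=\tfrac{\sqrt{2}+\log(1+\sqrt{2})}{6}$ (which it has anyway from \cref{closed-form-g1}), whereas your $\ell^1$--$\ell^2$ bound $\sqrt{a^2+b^2}\ge(|a|+|b|)/\sqrt{2}$ combined with $\mathbb{E}|X-\tfrac12|=\tfrac14$ is a genuinely more elementary and self-contained way to get the same numerical bound. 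Also note that the paper's Jensen-type bound $d(O,O_c)\le\mathbb{E}(d(O,v))$ is implicit in its one-line estimate $D\le\sqrt{2}+\mathbb{E}(d(O,v))$, which is essentially your $D\le d(O,c)+\sqrt{2}$ with the Jensen step folded in.
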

\begin{proof}
Let $O_c=\left(\frac{1}{2},\frac{1}{2}\right)\in\mathbb{R}^2$ denote the center of the square $[0,1]^2$. Let $\overline{v}$ denote the reflection of $v$ across the point $O_c$. Then we have 
\[\frac{d(O,v)+d(O,\overline{v})}{2}\ge\frac{d(v, \overline{v})}{2}= d(O_c,v).\]
Because $v$ and $\overline{v}$ have the same distribution, we have \[\mathbb{E}\left(d(O,v)\right) \ge \mathbb{E}\left(d(O_c,v)\right).\]
We use a closed-form formula of $\mathbb{E}\left(d(O_c,v)\right)$ established in \cref{subsec:closed-form} (\cref{closed-form-g1}) to obtain 
\[ \mathbb{E}\left(d(O_c,v)\right)= \frac{\sqrt{2}+\log \left(1+\sqrt{2}\right)}{6} \ge \frac{\sqrt{2}}{4}.\]
Therefore, by the definition of $D$, we have \[D\le \sqrt{2}+\mathbb{E}\left\{d(O,v)\right\}\le 4\;\mathbb{E}\left(d(O_c,v)\right)+\mathbb{E}\left\{d(O,v)\right\}\le 5\; \mathbb{E}\left(d(O,v)\right).\]  
\end{proof}

By \cref{eq:radvsrad1} and \cref{bound-f}, we proved \cref{lem:radvsrad}.

\appendix
\begin{appendix}
\section{Proofs of \cref{lemma-lambda} and \cref{lemma-mv-2}}\label{appen}
In this section, we prove two inequalities involving $\mathbb{E}\left(d(O,v)\right)$, $\mathbb{E}\left(\min\{d(O,v),R\}\right)$, and $\lambda_R$. Here $v$ is a uniform random point in $[0,1]^2$, $R$ is $\frac{3}{4} \; \mathbb{E}\left(d(O,v)\right)$, and $\lambda_R$ is the measure of the set $\{x\in [0,1]^2: d(O,x)>R\}$. For convenience, we rewrite these terms as explicit functions of $O$, that is, we let 
\begin{align*}
g_1(O)&:=\mathbb{E}\left(d(O,v)\right),\\
g_2(O)&:=\mathbb{E}\left(\min\{d(O,v),R\}\right),\\
g_3(O)&:=\lambda_R.
\end{align*}

The proofs of \cref{lemma-lambda} and \cref{lemma-mv-2} consist of several steps. 
In \cref{sec:Lipschitz}, we establish the Lipschitz conditions for $g_1$, $g_2$, and $g_3$. 
In \cref{sec:epsilon}, first, we prove the claims when $O$ is located far away from the unit square; next, we build a fixed-sized $\eps$-net $N$ of a bounded region, and use validated numerics to establish rigorous inequalities for all points in $N$; and finally, we prove the claims in \cref{lemma-lambda} and \cref{lemma-mv-2}
for any point $O$ in $\mathbb{R}^2$.
In \cref{subsec:closed-form}, we prove the closed-form formulas that are used in \cref{sec:epsilon}.

\subsection{Lipschitz Continuity of $g_1$, $g_2$, and $g_3$}
\label{sec:Lipschitz}
In this subsection, we compute the Lipschitz constants of the functions $g_1$, $g_2$, and $g_3$.

\begin{lemma}\label{lem:Lipschitz-g1}
For any $O,O'\in \mathbb{R}^2$, we have \[\left|g_1(O)-g_1(O')\right|\le d(O,O').\]
\end{lemma}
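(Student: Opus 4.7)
The plan is to prove this by a pointwise application of the reverse triangle inequality, followed by taking expectations. Concretely, for any fixed point $w\in[0,1]^2$, the reverse triangle inequality in the Euclidean plane gives
\[\bigl|d(O,w)-d(O',w)\bigr|\le d(O,O').\]
This is a completely standard fact: the function $w\mapsto d(O,w)$ differs from $w\mapsto d(O',w)$ by at most $d(O,O')$ everywhere, which is the same statement as saying that the distance-to-a-fixed-point function is $1$-Lipschitz in its other argument.

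Next, I would substitute the random point $v$ for $w$ and take expectations. Since $|d(O,v)-d(O',v)|\le d(O,O')$ holds almost surely, monotonicity of expectation yields
\[\mathbb{E}\bigl|d(O,v)-d(O',v)\bigr|\le d(O,O').\]
Combining this with linearity of expectation and the basic inequality $|\mathbb{E}(X)|\le \mathbb{E}|X|$, we obtain
\[\bigl|g_1(O)-g_1(O')\bigr|=\bigl|\mathbb{E}\bigl(d(O,v)-d(O',v)\bigr)\bigr|\le \mathbb{E}\bigl|d(O,v)-d(O',v)\bigr|\le d(O,O'),\]
which is exactly the stated Lipschitz bound.

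There is no genuine obstacle here; the lemma is essentially a one-line consequence of the triangle inequality combined with the fact that taking expectation preserves pointwise bounds. The only thing to be a little careful about is ensuring the integrand is integrable, but since $v$ lies in the bounded square $[0,1]^2$ and $O,O'$ are fixed points of $\mathbb{R}^2$, both $d(O,v)$ and $d(O',v)$ are bounded random variables, so all the expectations above are finite and the manipulations are justified. Analogous reasoning (with minor adjustments for the truncation at $R$) will then handle the Lipschitz continuity of $g_2$, and a separate geometric argument involving the boundary of the disk of radius $R$ centered at $O$ will be needed for $g_3$.
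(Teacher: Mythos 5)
Your proof is correct and follows essentially the same route as the paper's: apply the reverse triangle inequality pointwise to get $|d(O,v)-d(O',v)|\le d(O,O')$, then use $|\mathbb{E}(X)|\le\mathbb{E}|X|$ and monotonicity of expectation. The paper writes this as a short chain of inequalities without spelling out the integrability remark, but the argument is the same.
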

\begin{proof}
We have
\begin{align*}
    &\left|g_1(O)-g_1(O')\right|\\
    =&\left|\mathbb{E}\left(d(O,v)-d(O',v)\right)\right|\\
    \le&\mathbb{E}\left(\left|d(O,v)-d(O',v)\right|\right)\\
    \le&\mathbb{E}\left(d(O,O')\right)\\
    =&d(O,O').
\end{align*}
\end{proof}

\begin{lemma}\label{lem:Lipschitz-g2}
For any $O,O'\in \mathbb{R}^2$, we have \[\left|g_2(O)-g_2(O')\right|\le d(O,O').\]
\end{lemma}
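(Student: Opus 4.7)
The plan is to reduce the desired bound to a pointwise inequality on the integrand and then take expectations. The subtlety compared with \cref{lem:Lipschitz-g1} is that the truncation level $R$ itself depends on the depot, since $R=R(O):=\frac{3}{4}\,g_1(O)$; a naive split into ``vary $O$ at fixed $R$'' plus ``vary $R$ at fixed $O$'' via the triangle inequality only yields the weaker constant $\frac{7}{4}$, so the two variations must be handled \emph{simultaneously}.

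First I would record the following elementary property of the bivariate minimum: for any real numbers $a,b,a',b'$,
\[
\bigl|\min(a,b) - \min(a',b')\bigr| \;\le\; \max\bigl(|a-a'|,\; |b-b'|\bigr).
\]
This follows by symmetry: assuming $\min(a,b) \ge \min(a',b')$ and setting $M := \max(|a-a'|,|b-b'|)$, one has $a \le a'+M$ and $b \le b'+M$, so $\min(a,b) \le \min(a',b')+M$. The \emph{max}-instead-of-sum form of this bound is exactly what will deliver the Lipschitz constant~$1$.

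Next I would apply the inequality pointwise with $a=d(O,v)$, $b=R(O)$, $a'=d(O',v)$, $b'=R(O')$. The triangle inequality in $\mathbb{R}^2$ gives $|d(O,v)-d(O',v)| \le d(O,O')$, while \cref{lem:Lipschitz-g1} together with the definition of $R$ yields $|R(O)-R(O')| = \frac{3}{4}\,|g_1(O)-g_1(O')| \le \frac{3}{4}\,d(O,O')$. Hence, for every $v \in [0,1]^2$,
\[
\bigl|\min\{d(O,v),R(O)\} - \min\{d(O',v),R(O')\}\bigr| \;\le\; \max\!\left(d(O,O'),\; \tfrac{3}{4}\,d(O,O')\right) \;=\; d(O,O').
\]
Finally, taking expectations over the uniform random $v$ and pulling the absolute value outside via Jensen's inequality (as in \cref{lem:Lipschitz-g1}) concludes $|g_2(O)-g_2(O')|\le d(O,O')$. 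The only real obstacle is the $O$-dependence of $R$, and the $\max$ form of the pointwise $\min$ inequality absorbs this cleanly precisely because $\frac{3}{4}<1$.
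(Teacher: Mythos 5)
Your proof is correct and follows essentially the same route as the paper's: both rely on the pointwise bound $|\min(a,b)-\min(a',b')|\le\max(|a-a'|,|b-b'|)$, apply it with $a=d(O,v)$, $b=\tfrac{3}{4}g_1(O)$ and primed counterparts, invoke \cref{lem:Lipschitz-g1} and the triangle inequality, and then take expectations. You spell out the elementary $\min$--$\max$ inequality explicitly, which the paper uses without comment, but this is a presentational difference only.
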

\begin{proof}
By \cref{lem:Lipschitz-g1}, we have
\begin{align*}
    &\left|g_2(O)-g_2(O')\right|\\
    =&\left|\mathbb{E}\left(\min\left\{d(O,v),\frac{3}{4} g_1(O)\right\}-\min\left\{d(O',v),\frac{3}{4} g_1(O')\right\}\right)\right|\\
    \le&\mathbb{E}\left(\left|\min\left\{d(O,v),\frac{3}{4} g_1(O)\right\}-\min\left\{d(O',v),\frac{3}{4} g_1(O')\right\}\right|\right)\\
    \le &\mathbb{E}\left(\max\left\{\left|d(O,v)-d(O',v)\right|,\frac{3}{4} \left|g_1(O)- g_1(O')\right|\right\}\right)\\
    \le &\mathbb{E}\left(\max\left\{d(O,O'),\frac{3}{4}d(O,O')\right\}\right)\\
    =&d(O,O').
\end{align*}
\end{proof}
\begin{lemma}\label{lem:Lipschitz-g3}
For any $O,O'\in \mathbb{R}^2$, we have \[\left|g_3(O)-g_3(O')\right|\le (3+\sqrt{2})\;d(O,O').\]
\end{lemma}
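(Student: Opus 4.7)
My plan is to reduce the Lipschitz bound on $g_3$ to two separate Lipschitz estimates on the function $G(O, R)$ defined as the Lebesgue measure of $B(O, R) \cap [0,1]^2$. Since $g_3(O) = 1 - G(O, R(O))$ with $R(O) = \tfrac{3}{4} g_1(O)$, inserting the intermediate value $G(O', R(O))$ gives
\[|g_3(O) - g_3(O')| \le |G(O', R(O')) - G(O', R(O))| + |G(O', R(O)) - G(O, R(O))|.\]
If I can show that $G$ is Lipschitz in $R$ with constant $4$ and, for any fixed $R$, Lipschitz in $O$ with constant $\sqrt{2}$, then combining with the bound $|R(O) - R(O')| \le \tfrac{3}{4} d(O, O')$ from \cref{lem:Lipschitz-g1}, the first summand is at most $4 \cdot \tfrac{3}{4} d(O, O') = 3\, d(O, O')$ and the second at most $\sqrt{2}\, d(O, O')$, giving $(3+\sqrt{2})\, d(O, O')$ exactly.

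The Lipschitz constant $4$ in $R$ follows from the coarea formula $\partial G / \partial R = \ell(R; O)$, where $\ell(R; O)$ denotes the length of $\partial B(O, R) \cap [0,1]^2$. I will show $\ell \le 4 = \mathrm{per}([0,1]^2)$ by a convex-hull argument. Let $H$ be the convex hull of $\partial B(O, R) \cap [0,1]^2$. Every point of $\partial B(O, R) \cap [0,1]^2$ lies on the circle $\partial B(O, R)$; the tangent to this circle at such a point is a supporting line of $H$ that touches $H$ only at that point, so the point is extreme in $H$, hence belongs to $\partial H$. This gives $\ell \le \mathrm{per}(H)$. Since $H$ is convex and contained in the convex set $[0,1]^2$, the monotonicity of perimeter (for instance via Cauchy's formula) yields $\mathrm{per}(H) \le \mathrm{per}([0,1]^2) = 4$.

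For the Lipschitz constant $\sqrt{2}$ in $O$ with $R$ fixed, I will compute the gradient of $G(\cdot, R)$ via the divergence theorem applied to the constant vector field $\xi$ on the domain $B(O, R) \cap [0,1]^2$:
\[\nabla_O G(O, R) \cdot \xi \,=\, \int_{\partial B(O, R) \cap [0,1]^2} \xi \cdot \hat{n} \, d\sigma \,=\, -\, \xi \cdot N, \qquad N \,:=\, \int_{\partial [0,1]^2 \cap B(O, R)} n_K \, d\sigma,\]
where $\hat{n}$ is the outward normal to $B(O, R)$ and $n_K$ the outward normal to $[0,1]^2$. Each of the four sides of $[0,1]^2$ contributes an axis-aligned vector whose magnitude is the length of that side lying inside $B(O, R)$, which is at most $1$; the contributions from opposite sides subtract, so $N$ has $e_1$- and $e_2$-components each of absolute value at most $1$. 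Hence $|N| \le \sqrt{2}$ uniformly in $O$ and $R$, and integrating this a.e.\ gradient bound along the segment from $O$ to $O'$ yields the stated Lipschitz constant.

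The principal technical obstacle is the rigorous justification of the differentiation, since $G$ is Lipschitz but not $C^1$ in general, with possible derivative jumps where $\partial B(O, R)$ becomes tangent to a side of $[0,1]^2$ or passes through a corner. I will handle this via a standard mollification argument, or equivalently by first verifying directly (through a symmetric-difference estimate) that $G(\cdot, R)$ is Lipschitz and then identifying its a.e.\ gradient by Rademacher's theorem.
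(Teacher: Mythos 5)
Your proof is correct and follows essentially the same decomposition as the paper's: split the perturbation of the region $B(O,R(O))\cap[0,1]^2$ into a radius change (Lipschitz constant $4\cdot\tfrac{3}{4}=3$, bounding the circular-arc length inside the square by $4$) and a center change (Lipschitz constant $\sqrt{2}$, from the signed lengths of the four side--circle intersections), exactly the two estimates the paper combines. You simply spell out with more rigor two steps the paper asserts without derivation --- the identity $\nabla_O G=-N$ via the divergence theorem, and the nonsmooth points via Rademacher/mollification --- but the structure and all constants are identical.
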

\begin{proof}
For each $O\in\mathbb{R}$ and each $r\in \mathbb{R}^+$, let $C(O,r)$ denote the circle $\{x\in\mathbb{R}^2:d(O,x)=r\}$. Furthermore, let $B(O,r)$ denote the intersection of $[0,1]^2$ and the interior of $C(O,r)$, and $\gamma(O,r)$ denote the intersection of $[0,1]^2$ and $C(O,r)$. 

Sincec $B(O,r)$ is a convex subset of the unit square $[0,1]^2$, by an axiom\footnote{The axiom can be equivalently stated as follows: given two nested convex closed curves, the inner one is shorter. Archimedes applied this axiom to show that \cite{Archimedes2,richeson2015circular} the ratio of a circle's circumference to its diameter is a constant, namely Archimedes' constant $\pi$. He then obtained \cite{Archimedes2} the first rigorous approximation of $\pi$. Nowadays, there are many modern proofs of this axiom.} of Archimedes \cite{Archimedes}, the length of the boundary of $B(O,r)$ is at most $4$. Because $\gamma(O,r)$ is part of the boundary of $B(O,r)$, the length $\norm{\gamma(O,r)}$ is at most $4$. Since $\norm{\gamma(O,r)}$ is the derivative of the measure of $B(O,r)$ with respect to $r$, the difference between the measures of $B(O,r)$ and $B(O,r')$ is at most $4|r-r'|$. 

For each $O\in\mathbb{R}$ and each $r\in \mathbb{R}^+$, let $\gamma_i(O,r)$ ($i=1,2,3,4$) denote the intersection of an edge of $[0,1]^2$ (left, right, bottom, top) and $C(O,r)$. Then the derivative of the measure of $B(O,r)$ with respect to $x$-coordinate (resp.\ $y$-coordinate) of $O$ is $\norm{\gamma_1}-\norm{\gamma_2}$ (resp.\ $\norm{\gamma_3}-\norm{\gamma_3}$). As a result, the difference between the measures of $B(O,r)$ and $B(O',r)$ is at most $\sqrt{2}\; d(O,O')$. 

By definition, the measure of $B\left(O,\frac{3}{4}g_1(O)\right)$ is $1-g_3(O)$, and the measure of $B\left(O',\frac{3}{4}g_1(O')\right)$ is $1-g_3(O')$. Therefore, by \cref{lem:Lipschitz-g1}, we have
\begin{align*}
     & \left|g_3(O)-g_3(O')\right|\\
    \le &\sqrt{2} \;d(O,O')+3 \left|g_1(O)-g_1(O')\right|\\
    \le&(3+\sqrt{2})\;d(O,O').
\end{align*}
\end{proof}
\subsection{Proofs Using the $\eps$-Net}
\label{sec:epsilon}
In this subsection, we prove \cref{lemma-lambda} and \cref{lemma-mv-2}. It suffices to show 
\begin{equation}\label{eq:g2}
    g_2(O)\ge \frac{31}{48}g_1(O)
\end{equation}
and
\begin{equation}\label{eq:g3}
    g_3(O)\ge \frac{31}{48}
\end{equation}
for every $O\in\mathbb{R}^2$.

Let $d(O,[0,1]^2)$ denote the distance between $O$ and $[0,1]^2$. Then we have
\[g_1(O)=\mathbb{E}\left(d(O,v)\right)\le d(O,[0,1]^2)+\sqrt{2}.\] If $d(O,[0,1]^2)$ is at least $3\sqrt{2}$, then we have \[R=\frac{3}{4} g_1(O)\le\frac{3}{4}\left(d(O,[0,1]^2)+\sqrt{2}\right)\le d(O,[0,1]^2). \]
Thus by definition, we have $g_2(O) = \frac{3}{4} g_1(O)$ and $g_3(O) = 1$. Therefore, both \cref{eq:g2} and \cref{eq:g3} hold.

It remains to consider the case when $d(O,[0,1]^2)$ is less than $3\sqrt{2}$. Without loss of generality, we also assume that $O$ is located to the right of the vertical line $x=\frac{1}{2}$ and above the diagonal line $y=x$.

Let $N$ denote the point set \[N:=\left\{(a,b)\in\mathbb{R}^2: \left(\frac{a-0.5}{0.002}, \frac{b-0.5}{0.002}\right)\in\{0,1,\ldots, 2371\}^2, a\le b \right\}.\]
For each $O'\in N$, we have verified \[g_2(O')-\frac{31}{48}g_1(O')\ge 0.0025\] and \[g_3(O')-\frac{31}{48}\ge 0.0096.\]  
This is done using a rigorous computer-assisted proof.\footnote{Lengthy discussions on rigorous computer-assisted proofs can be found in many works in the literature, see, e.g., Zwick's discussion on the role of computer-assisted proofs in
mathematics and computer science~\cite{zwick2002computer}, the computer-assisted proofs for the four color theorem~\cite{apple1997_1,appel1977_2} and Kepler's conjecture~\cite{hales2005proof}.}
For each $O'\in N$, we compute the values of $g_1(O'), g_2(O'), g_3(O')$ using the closed-form formulas derived in \cref{closed-form-g1} and  \cref{closed-form-g2-g3}; see \cref{subsec:closed-form}.
Our computation\footnote{Our verification code is available at \url{https://github.com/Zipei-Nie/CVRP-proofs}. We use the kv library \cite{kv} for interval arithmetic.} carefully accounts for all the rounding errors using \emph{interval arithmetic}~\cite{hansen2003global}.

Consider a point $O$ in the set \[\left\{(a,b)\in \mathbb{R}^2:d((a,b),[0,1]^2)<3\sqrt{2},  \frac{1}{2}\le a\le b\right\}.\]
By the construction of $N$, there exists $O'\in N$ such that $d(O,O')\le \frac{\sqrt{2}}{1000}$.
By \cref{lem:Lipschitz-g1} and \cref{lem:Lipschitz-g2}, we have
\begin{align*}
    &g_2(O)-\frac{31}{48}g_1(O)\\
   =&g_2(O')-\frac{31}{48}g_1(O') +(g_2(O)-g_2(O'))-\frac{31}{48}(g_1(O)-g_1(O'))\\
   \ge& 0.0025 -\frac{79}{48} d(O,O')\\
   \ge& 0.0025-\frac{79\sqrt{2}}{48000}\\
   \ge& 0.
\end{align*}
By \cref{lem:Lipschitz-g3}, we have
\begin{align*}
    &g_3(O)-\frac{31}{48}\\
   =&g_3(O')-\frac{31}{48} +(g_3(O)-g_3(O'))\\
   \ge& 0.0096 -(3+\sqrt{2}) d(O,O')\\
   \ge& 0.0096-\frac{(3+\sqrt{2})\sqrt{2}}{1000}\\
   \ge& 0.
\end{align*}
Thus \cref{eq:g2} and \cref{eq:g3} hold. 

We completed the proofs of \cref{lemma-lambda} and \cref{lemma-mv-2}.

\subsection{Closed-Form Formulas for $g_1$, $g_2$, and $g_3$}\label{subsec:closed-form}
In this subsection, we will derive closed-form formulas for the functions $g_1$, $g_2$, and $g_3$. 

First, we compute the integrations of $1$ and $\sqrt{x^2+y^2}$ over a right triangle. 

\begin{definition}
Define the functions $A_0,A_1:\mathbb{R}^2\to \mathbb{R}$ by
\[A_0(a,b):=\begin{cases}\int_{0}^a \int_{0}^{\frac{b x}{a}} \,dy \,dx&\mbox{, if }a\neq 0,\\
0&\mbox{, if } a=0
\end{cases}\]
and
\[A_1(a,b):=\begin{cases}\int_{0}^a \int_{0}^{\frac{b x}{a}} \sqrt{x^2+y^2} \,dy \,dx&\mbox{, if }a\neq 0,\\
0&\mbox{, if } a=0
\end{cases}\]
for every $a,b\in \mathbb{R}$.
\end{definition}

Clearly, we have \[A_0(a,b)= \frac{ab}{2}.\] To obtain a closed-form formula for $A_1(a,b)$, we can use the result derived by Stone \cite{stone1991some}. See \cite{li2016moments} for an alternative proof.

\begin{lemma}\label{lem:right-triangle}\cite{stone1991some}
For every $a,b\in \mathbb{R}$ with $a\neq 0$, we have
\[
    A_1(a,b)=\begin{cases}\frac{a^3}{6} \log(\frac{b}{|a|}+\sqrt{1+\frac{b^2}{a^2}})+ \frac{ab}{6}\sqrt{a^2+b^2}&\mbox{, if }a\neq 0,\\
0&\mbox{, if } a=0.
\end{cases}\]
\end{lemma}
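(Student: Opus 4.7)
The plan is to evaluate $A_1(a,b)$ by switching to polar coordinates. I would first treat the positive case $a,b>0$, where the region of integration is the triangle with vertices $(0,0)$, $(a,0)$, $(a,b)$. In polar coordinates this triangle is described by $0\le\theta\le\arctan(b/a)$ together with $0\le r\le a\sec\theta$, since the vertical line $x=a$ has polar equation $r=a\sec\theta$. Using $\sqrt{x^2+y^2}\,dx\,dy = r^2\,dr\,d\theta$, we obtain
\[
A_1(a,b)=\int_0^{\arctan(b/a)}\!\!\int_0^{a\sec\theta} r^2\,dr\,d\theta \;=\; \frac{a^3}{3}\int_0^{\arctan(b/a)}\sec^3\theta\,d\theta.
\]

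Next, I would apply the classical antiderivative $\int\sec^3\theta\,d\theta = \tfrac{1}{2}\bigl(\sec\theta\tan\theta + \log|\sec\theta+\tan\theta|\bigr)$, obtained by a short integration-by-parts argument. At the upper limit $\theta=\arctan(b/a)$ we have $\tan\theta=b/a$ and $\sec\theta=\sqrt{1+b^2/a^2}=\sqrt{a^2+b^2}/a$, while both trigonometric terms vanish at $\theta=0$. Substituting and simplifying (using $|a|=a$) gives
\[
A_1(a,b)=\frac{a^3}{6}\left(\frac{b}{a}\cdot\frac{\sqrt{a^2+b^2}}{a} + \log\!\left(\frac{b}{a}+\sqrt{1+\frac{b^2}{a^2}}\right)\right),
\]
which, after obvious simplification, coincides with the claimed formula in the positive case.

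To extend the formula to arbitrary signs, I would observe that substituting $b\mapsto -b$ in the defining double integral reverses the direction of the inner integration while leaving the integrand $\sqrt{x^2+y^2}$ invariant, so $A_1(a,-b)=-A_1(a,b)$; a symmetric argument in the $x$-integration gives $A_1(-a,b)=-A_1(a,b)$. The right-hand side of the claimed identity carries exactly these parities: the factors $a^3$ and $ab$ absorb the sign changes, while $\sqrt{a^2+b^2}$ and $|a|$ are invariant under $a\mapsto -a$, and the ratio $b/|a|$ transforms correctly under $b\mapsto -b$. So once the identity is established for $a,b>0$ it propagates to all four sign configurations by symmetry. The main (and really only) obstacle is this sign bookkeeping; the analytic heart of the lemma is the $\sec^3\theta$ antiderivative. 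Since the lemma is attributed to Stone \cite{stone1991some}, the authors would most plausibly simply cite that source, or include just the polar-coordinates computation sketched above for the reader's convenience.
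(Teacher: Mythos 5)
Your polar-coordinate derivation is correct: for $a,b>0$ the triangle is $\{0\le\theta\le\arctan(b/a),\ 0\le r\le a\sec\theta\}$, the inner integral produces $\tfrac{a^3}{3}\sec^3\theta$, the standard antiderivative of $\sec^3\theta$ evaluated at $\arctan(b/a)$ gives exactly the claimed expression with $|a|=a$, and the odd parities $A_1(a,-b)=-A_1(a,b)$, $A_1(-a,b)=-A_1(a,b)$ (from the defining integral) match the parities of the right-hand side, since $\tfrac{b}{|a|}+\sqrt{1+b^2/a^2}$ and its reciprocal swap under $b\mapsto -b$ while $a^3$ and $ab$ flip sign under $a\mapsto -a$. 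Note, however, that the paper gives no proof of this lemma at all; it simply attributes the formula to Stone~\cite{stone1991some} and points to~\cite{li2016moments} for an alternative derivation. So there is no paper proof to compare against, and your argument is a perfectly serviceable self-contained substitute for the citation.
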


Then we have the following closed-form formula for $g_1$. See \cite{stone1991some} for an alternative formulation of the result.

\begin{theorem}\label{closed-form-g1}
For any $O=(a,b)\in \mathbb{R}^2$, we have
\begin{align*}
    g_1(O)=& A_1(a,b)+A_1(b,a)+A_1(b,1-a)+A_1(1-a,b)\\&+A_1(1-a,1-b)+A_1(1-b,1-a)+ A_1(1-b,a)+A_1(a,1-b).
\end{align*}
\end{theorem}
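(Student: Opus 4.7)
The plan is to translate coordinates so that the depot $O=(a,b)$ is placed at the origin; then the integration region $[0,1]^2$ becomes the axis-aligned rectangle $R := [-a, 1-a] \times [-b, 1-b]$, and
\[
g_1(O) \;=\; \iint_R \sqrt{x^2+y^2}\, dx\, dy.
\]
I would decompose $R$ into four sub-rectangles by cutting along the two coordinate axes (one sub-rectangle per quadrant), and then split each sub-rectangle further along the diagonal through the origin into two right triangles whose legs lie along the axes. This produces eight right triangles in total, each with one vertex at the origin.

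Next, I would use the invariance of $\sqrt{x^2+y^2}$ under the reflections $x\mapsto -x$, $y\mapsto -y$ and under the swap $(x,y)\mapsto(y,x)$. Under a suitable composition of these symmetries, each of the eight triangles is congruent to a standard triangle of the form $\{0 \le x \le p,\ 0 \le y \le qx/p\}$ for some $p,q \in \{a, b, 1-a, 1-b\}$, and the integral of $\sqrt{x^2+y^2}$ over it equals $A_1(p,q)$ by definition. For example, the first-quadrant sub-rectangle $[0,1-a]\times[0,1-b]$ splits into the lower triangle bounded by $y \le (1-b)x/(1-a)$, which contributes $A_1(1-a, 1-b)$, and the upper triangle, which after swapping the roles of $x$ and $y$ contributes $A_1(1-b, 1-a)$. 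Treating the other three quadrants analogously yields the remaining six $A_1$-terms.

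The main issue, rather than a real mathematical obstacle, is bookkeeping: one must verify that each of the eight listed pairs $(p,q)$ arises from exactly one triangle, with the correct choice of ``leg along the $x$-axis'' versus ``leg along the $y$-axis'' (which is what distinguishes $A_1(p,q)$ from $A_1(q,p)$). Once the correspondence is tabulated, the identity follows by additivity of the integral. The degenerate cases where some of $a, b, 1-a, 1-b$ vanish (so that a sub-rectangle collapses to a segment) are absorbed by the convention $A_1(0,\cdot) := 0$ built into the definition, making the formula valid uniformly in $O=(a,b)$.
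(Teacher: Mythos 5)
Your proposal is correct and takes essentially the same approach as the paper: translate so $O$ is at the origin, cut the square into eight right triangles each having a vertex at $O$ and legs parallel to the axes, and identify each triangle's contribution with an $A_1$-term via the symmetries of $\sqrt{x^2+y^2}$. The only (shared) informality is that the geometric decomposition is literally a partition only when $O\in[0,1]^2$; for $O$ outside the square the identity persists as a signed decomposition since $A_1$ changes sign under negating either argument, which your remark on degenerate cases does not quite cover but which is also left implicit in the paper.
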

\begin{proof}
Divide the square $[0,1]^2$ into eight right triangles. Each triangle has one vertex at $O$ and one vertex at a corner of the square, and has one edge parallel to the $x$-axis and one edge parallel to the $y$-axis. 

By the definition of $g_1(O)$, we have \[g_1(O)=\int_0^1\int_0^1 \sqrt{(x-a)^2+(y-b)^2}\, dx\, dy.\]
We can break down this integration into eight separate ones, one for each of the right triangles formed by dividing the square. Since each integration has the form $A_1(\cdot, \cdot)$, the closed-form formula for $g_1$ follows from Lemma \ref{lem:right-triangle}.
\end{proof}

Then, we derive closed-form formulas for the integrations of $1$ and $\sqrt{x^2+y^2}$ over a disk segment. 

\begin{definition}
For each $h\in \mathbb{R}$, let $S_h$ denote the disk segment \[S_h:= \{(x,y)\in \mathbb{R}^2 : x\le h, x^2+y^2\le 1 \}.\]
We define the functions $B_0,B_1:\mathbb{R}\to \mathbb{R}$ by
\[B_0(h):=\iint_{S_h} dx\,dy\]
and 
\[B_1(h):=\iint_{S_h} \sqrt{x^2+y^2} \,dx\,dy\]
for every $h\in\mathbb{R}$.
\end{definition}

\begin{lemma}\label{closed-form-B}
For each $i=0,1$ and each $h\in \mathbb{R}$, we have \[B_i(h)=\begin{cases}0 &\mbox{, if } h< -1,\\ \frac{3-i}{3}\left(\pi -\arccos h\right)+ 2 A_i\left(h,\sqrt{1-h^2}\right) &\mbox{, if } -1\le h < 1,\\ \frac{3-i}{3}\pi &\mbox{, if } h \ge 1.\end{cases}\]
\end{lemma}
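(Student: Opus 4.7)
The plan is to handle the three cases of $h$ separately, with the main effort going into the middle case $-1 \le h < 1$, which I will attack by decomposing the disk segment $S_h$ into a circular sector and an isosceles triangle whose integrals are immediate. The two boundary cases are quick: $S_h = \emptyset$ for $h < -1$, and for $h \ge 1$, $S_h$ is the whole closed unit disk, so a one-line polar-coordinate computation yields $B_0(h) = \pi$ and $B_1(h) = \int_0^{2\pi}\int_0^1 r^2\,dr\,d\theta = \frac{2\pi}{3}$, matching $\frac{3-i}{3}\pi$ for $i\in\{0,1\}$.

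For the main range $-1 \le h < 1$, set $\theta_0 := \arccos h \in (0, \pi]$ and introduce the circular sector $\Sigma := \{(r\cos\theta, r\sin\theta) : 0 \le r \le 1,\ \theta_0 \le \theta \le 2\pi - \theta_0\}$ together with the triangle $T$ with vertices $(0,0)$, $(h,\sqrt{1-h^2})$, and $(h,-\sqrt{1-h^2})$. Working in polar coordinates one checks that $\Sigma$ and $T$ have disjoint interiors, and that $S_h = \Sigma \cup T$ when $h \ge 0$ while $S_h = \Sigma \setminus T$ when $h < 0$; intuitively, the origin lies inside $S_h$ in the first case and outside $S_h$ in the second. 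Polar coordinates then immediately give the sector contribution
\[
\iint_\Sigma (x^2+y^2)^{i/2}\,dA \;=\; \int_{\theta_0}^{2\pi-\theta_0}\int_0^1 r^{i+1}\,dr\,d\theta \;=\; \frac{2(\pi-\theta_0)}{i+2} \;=\; \frac{3-i}{3}(\pi - \arccos h)
\]
for $i \in \{0, 1\}$, which is exactly the first term of the claimed formula.

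For the triangle contribution I will split $T$ along the $x$-axis; by the $y$-symmetry of the integrand the two resulting right triangles contribute equally, and each is precisely the domain of integration appearing in the definition of $A_i(h, \sqrt{1-h^2})$. When $h > 0$ this gives $\iint_T (x^2+y^2)^{i/2}\,dA = 2 A_i(h, \sqrt{1-h^2})$, which combined with $S_h = \Sigma \cup T$ produces the claimed formula. The one delicate point, which I expect to be the main obstacle in the writeup, is the case $h < 0$: the outer integral in the definition of $A_i(h, \sqrt{1-h^2})$ then runs from $0$ down to the negative number $h$, so $A_i(h, \sqrt{1-h^2})$ equals the negative of half the true integral of $(x^2+y^2)^{i/2}$ over $T$. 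This built-in sign flip inside $A_i$ cancels exactly against the sign flip coming from using $S_h = \Sigma \setminus T$ rather than $\Sigma \cup T$, so the single expression $\frac{3-i}{3}(\pi - \arccos h) + 2 A_i(h, \sqrt{1-h^2})$ continues to hold across $h = 0$; the remaining work is only to verify this cancellation carefully from the definitions.
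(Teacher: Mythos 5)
Your proposal is correct and follows essentially the same route as the paper: decompose the disk segment $S_h$ into the circular sector of central angle $2(\pi - \arccos h)$ together with the two right triangles hanging from the chord $x=h$, compute the sector contribution by polar coordinates and recognize the triangle contributions as $A_i(h,\sqrt{1-h^2})$. The one place you are more careful than the paper is in spelling out that when $h<0$ the built-in sign of $A_i(h,\cdot)$ exactly compensates for $S_h = \Sigma \setminus T$ rather than $\Sigma \cup T$ — the paper's proof states the decomposition without remarking on this, trusting the signed definition of $A_i$ to absorb it.
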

\begin{proof}
For $h < -1$, the region $S_h$ is empty, so $B_0(h)=B_1(h)=0$. For $h\ge -1$, the region $S_h$ is the unit disk, so $B_0(h)=\pi$ and $B_1(h)=\frac{2\pi}{3}$.

Suppose that $-1\le h < 1$. Divide $S_h$ into a disk sector and two right triangles. Each triangle has one vertex at the origin and one vertex at an intersection point of the unit circle and the line $x=h$, and has one edge on the $x$-axis and one edge on the line $x=h$. 

We can break down the integrations over $S_h$ into three separate ones, one for the disk sector and one for each right triangle. The disk sector has central angle $2(\pi-\arccos h)$ and radius $1$, so the area is $\pi-\arccos h$. The average distance from a uniform random point on the disk sector to the center is $\frac{2}{3}$ times the radius, so the integration of $\sqrt{x^2+y^2}$ over the disk sector is $\frac{2}{3}(\pi-\arccos h)$. The integrations over the right triangles have the form $A_i(\cdot,\cdot)$ ($i=0,1$), so we can use Lemma \ref{lem:right-triangle} to derive the closed-form formulas for $B_0$ and $B_1$.
\end{proof}

Next, we derive closed-form formulas for the integrations of $1$ and $\sqrt{x^2+y^2}$ over a region formed by the intersection of a disk and two half-planes. 

\begin{definition}
For each $h_1,h_2\in \mathbb{R}$, let $S_{h_1,h_2}$ denote the region \[S_{h_1,h_2}:= \{(x,y)\in \mathbb{R}^2 : x\le h_1,y\le h_2, x^2+y^2\le 1 \}.\]
We define the functions $C_0,C_1:\mathbb{R}^2\to \mathbb{R}$ by
\[C_0(h_1,h_2):=\iint_{S_{h_1,h_2}} dx\,dy\]
and 
\[C_1(h_1,h_2):=\iint_{S_{h_1,h_2}} \sqrt{x^2+y^2} \,dx\,dy\]
for every $h_1,h_2\in\mathbb{R}$.
\end{definition}
\begin{lemma}\label{closed-form-C}
For each $i=0,1$ and each $h_1,h_2\in \mathbb{R}$, we have
\[C_i(h_1,h_2)=
\begin{cases}
 0 &\mbox{, if }h_1^2+h_2^2 > 1, h_1\le 0, h_2\le 0,\\
 B_i(h_2) &\mbox{, if }h_1^2+h_2^2 > 1, h_1>0, h_2\le 0,\\
 B_i(h_1)&\mbox{, if }h_1^2+h_2^2 > 1, h_1\le 0, h_2>0,\\
B_i(h_1)+B_i(h_2)-\frac{3-i}{3}\pi&\mbox{, if } h_1^2+h_2^2 > 1 ,h_1>0,h_2>0,\\
    {\frac{3-i}{6}\left(\frac{\pi}{2}+\arcsin h_1+\arcsin h_2\right) + A_i\left(h_1,\sqrt{1-h_1^2}\right)} \\{+ A_i\left(h_2,\sqrt{1-h_2^2}\right)+A_i(h_1,h_2)+A_i(h_2,h_1)}
 &\mbox{, if } h_1^2+h_2^2\le 1.\\
\end{cases}\]
\end{lemma}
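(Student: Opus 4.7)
The plan is to split on whether the corner point $(h_1, h_2)$ lies outside or inside the closed unit disk, and in each regime reduce the claim to elementary geometric identities together with the already-established formulas for $A_i$ and $B_i$.

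\textbf{Case A: $h_1^2 + h_2^2 > 1$.} Here $(h_1, h_2)$ lies strictly outside the closed disk, and I will handle the four subcases according to the signs of $h_1$ and $h_2$ by direct geometric observation. When $h_1, h_2 \le 0$, any $(x,y) \in S_{h_1, h_2}$ would satisfy $x^2 \ge h_1^2$ and $y^2 \ge h_2^2$, contradicting the disk bound, so the region is empty. When $h_1 > 0$ and $h_2 \le 0$ (resp.\ $h_2 > 0$ and $h_1 \le 0$), the inequality $h_1 > \sqrt{1-h_2^2}$ (resp.\ $h_2 > \sqrt{1-h_1^2}$) makes the constraint $x \le h_1$ (resp.\ $y \le h_2$) redundant inside the disk cap carved out by the other constraint, and the $x \leftrightarrow y$ symmetry of $f_i := (x^2+y^2)^{i/2}$ reduces the integral to $B_i(h_2)$ (resp.\ $B_i(h_1)$). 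When $h_1, h_2 > 0$, the inequality $h_1^2 + h_2^2 > 1$ rules out any common point of $\{x > h_1\}$, $\{y > h_2\}$ and the disk, so inclusion-exclusion applied to the complement within the disk, together with $\iint_{\text{disk}} f_i = \tfrac{3-i}{3}\pi$, yields the claimed $B_i(h_1) + B_i(h_2) - \tfrac{3-i}{3}\pi$.

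\textbf{Case B: $h_1^2 + h_2^2 \le 1$.} Now $Q := (h_1, h_2)$ lies in the closed disk. Set $P_1 := (h_1, -\sqrt{1-h_1^2})$ and $P_2 := (-\sqrt{1-h_2^2}, h_2)$, the points on the unit circle that bound $S_{h_1, h_2}$ on the lines $x = h_1$ and $y = h_2$. The boundary of $S_{h_1, h_2}$ then consists of the counterclockwise arc from $P_2$ to $P_1$ together with the segments $P_1 \to Q$ and $Q \to P_2$. My plan is to write $S_{h_1, h_2}$ as a signed sum of three oriented pieces: the circular sector $\Sigma$ bounded by this arc and the two radii $OP_1$ and $OP_2$; the oriented triangle $O \to P_1 \to Q \to O$; and the oriented triangle $O \to Q \to P_2 \to O$. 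A short bookkeeping check shows that the three segments $OP_1$, $OQ$, $OP_2$ each appear with opposite orientations in two of these pieces and hence cancel, so the net oriented boundary is exactly that of $S_{h_1, h_2}$. Green's theorem (or direct signed additivity of integrals) then gives $C_i(h_1, h_2)$ as the sum of the signed $f_i$-integrals over the three pieces, whether or not $O$ lies in $S_{h_1, h_2}$.

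I will then evaluate each piece. Using $\arccos h = \pi/2 - \arcsin h$, the polar angles of $P_1$ and $P_2$ differ by $\alpha := \pi/2 + \arcsin h_1 + \arcsin h_2$, so $\Sigma$ contributes $\alpha/2 = \tfrac{3}{6}\alpha$ to $C_0$ and $\alpha/3 = \tfrac{2}{6}\alpha$ to $C_1$, matching the prefactor $\tfrac{3-i}{6}$ in the stated formula. For each triangle I will use the elementary identity that the signed $f_i$-integral over the oriented triangle $O \to (h_1, p) \to (h_1, q) \to O$ equals $A_i(h_1, q) - A_i(h_1, p)$, which follows by splitting the triangle at $(h_1, 0)$ and invoking the definition of $A_i$. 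Combined with the antisymmetry $A_i(a, -b) = -A_i(a, b)$ — obvious for $i=0$ and, for $i=1$, an immediate consequence of \cref{lem:right-triangle} together with the identity $\log(\sqrt{1+x^2} - x) = -\log(\sqrt{1+x^2} + x)$ — and the $x \leftrightarrow y$ symmetry applied to the triangle $O \to Q \to P_2 \to O$, this yields exactly the four $A_i$ terms appearing in the formula. The main delicate point is the sign bookkeeping in Case B: when $h_1 < 0$ or $h_2 < 0$, the origin need not lie in $S_{h_1, h_2}$, so the decomposition into a sector and two triangles is not a literal set partition but a signed sum; justifying it via boundary cancellation and tracking that the sign conventions built into $A_i$ absorb the resulting signs correctly is where the main care is required.
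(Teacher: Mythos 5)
Your proof is correct and takes essentially the same route as the paper: geometric observation (empty set, a single cap, inclusion-exclusion against the full disk) for the four exterior subcases, and a decomposition into a disk sector plus right triangles for the interior case, with the sector angle $\frac{\pi}{2}+\arcsin h_1 + \arcsin h_2$ and the four $A_i$ terms arising exactly as in \cref{closed-form-C}. The paper's proof is terser — it simply states ``divide the region into a disk sector and four right triangles'' without discussing what happens when $O\notin S_{h_1,h_2}$ — whereas your signed-boundary/Green's-theorem bookkeeping and the antisymmetry $A_i(a,-b)=-A_i(a,b)$ make explicit the sign cancellations that the paper's phrasing leaves implicit.
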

\begin{proof}
The proof is very similar to those of \cref{closed-form-g1} and \cref{closed-form-B}.

For $h_1^2+h_2^2> 1, h_1 \le 0, h_2\le 0$, the region $S_{h_1,h_2}$ is empty. For $h_1^2+h_2^2> 1, h_1 > 0, h_2\le 0$, the region $S_{h_1,h_2}$ is a disk segment. The situation is symmetric when $h_1^2+h_2^2> 1, h_1 \le 0, h_2> 0$.

For $h_1^2+h_2^2>1, h_1>0, h_2>0$, we divide the region $S_{h_1,h_2}$ into two disk segments and a negative disk. For $h_1^2+h_2^2\le 1$, we divide the region $S_{h_1,h_2}$ into a disk sector and four right triangles.

In any case, we can use \cref{lem:right-triangle} and \cref{closed-form-B} to derive the closed-form formulas for $C_0$ and $C_1$.
\end{proof}

Then, we obtain closed-form formulas for integrations directly related to the definitions of $g_2$ and $g_3$.

\begin{definition}
Define the functions $D_0,D_1:\mathbb{R}^2\times \mathbb{R}^+\to \mathbb{R}$ by
\[D_0(a,b,R)=\frac{1}{R^2}\int_{0}^1 \int_0^1 \mathds{1}\left(d(O,v)\le R\right)\,dx\,dy\]
and 
\[D_1(a,b,R)=\frac{1}{R^3}\int_{0}^1 \int_0^1 d(O,v)\mathds{1}\left(d(O,v)\le R\right)\,dx\,dy\]
for every $a,b\in \mathbb{R}$ and $R\in\mathbb{R}^+$.
\end{definition}

\begin{lemma}\label{closed-form-D}
    For each $i=0,1$, each $a,b\in \mathbb{R}$, and each $R\in\mathbb{R}^+$, we have
    \[D_i(a,b,R)=C_i\left(\frac{1-a}{R},\frac{1-b}{R}\right)-C_i\left(\frac{1-a}{R},\frac{-b}{R}\right)-C_i\left(\frac{-a}{R},\frac{1-b}{R}\right)+C_i\left(\frac{-a}{R},\frac{-b}{R}\right).\]
\end{lemma}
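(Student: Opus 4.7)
The plan is to reduce the integral defining $D_i(a,b,R)$ to integrals over regions of the form $S_{h_1,h_2}$, then apply the definition of $C_i$. The key tools are a linear change of variables that sends the disk of radius $R$ around $O=(a,b)$ to the unit disk, together with a two-dimensional inclusion-exclusion that expresses the indicator of a rectangle as a signed combination of indicators of four infinite quadrants (each clipped to the unit disk).

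Concretely, I would first perform the substitution $x' = (x-a)/R$, $y' = (y-b)/R$, with Jacobian $R^2$ and $d(O,v) = R\sqrt{(x')^2+(y')^2}$. The condition $d(O,v)\le R$ becomes $(x')^2+(y')^2\le 1$, and the domain $[0,1]^2$ becomes the rectangle
\[
 \Bigl[-\tfrac{a}{R},\tfrac{1-a}{R}\Bigr] \times \Bigl[-\tfrac{b}{R},\tfrac{1-b}{R}\Bigr].
\]
After cancelling the prefactor $R^{-2}$ (for $i=0$) or $R^{-3}\cdot R$ (for $i=1$, using $d(O,v)=R\sqrt{(x')^2+(y')^2}$) against the Jacobian, the integrand becomes exactly the integrand defining $C_i$, namely $1$ when $i=0$ and $\sqrt{(x')^2+(y')^2}$ when $i=1$.

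Next, I would write the indicator of the rectangle as the signed sum
\[
\mathds{1}\!\left[x'\le \tfrac{1-a}{R},\ y'\le \tfrac{1-b}{R}\right]
-\mathds{1}\!\left[x'\le \tfrac{1-a}{R},\ y'\le \tfrac{-b}{R}\right]
-\mathds{1}\!\left[x'\le \tfrac{-a}{R},\ y'\le \tfrac{1-b}{R}\right]
+\mathds{1}\!\left[x'\le \tfrac{-a}{R},\ y'\le \tfrac{-b}{R}\right],
\]
which is the standard inclusion-exclusion for a rectangle as a difference of lower-left quadrants. Intersecting each of these four quadrants with the unit disk produces exactly a region of the shape $S_{h_1,h_2}$ from the definition preceding Lemma \ref{closed-form-C}, with the four pairs of parameters $(h_1,h_2)$ given by $\left(\tfrac{1-a}{R},\tfrac{1-b}{R}\right)$, $\left(\tfrac{1-a}{R},\tfrac{-b}{R}\right)$, $\left(\tfrac{-a}{R},\tfrac{1-b}{R}\right)$, $\left(\tfrac{-a}{R},\tfrac{-b}{R}\right)$. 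Integrating the appropriate integrand over each of these four regions yields the corresponding value of $C_i$, and the signed sum is precisely the right-hand side of the claimed identity.

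No genuine obstacle arises: this is bookkeeping around a change of variables and inclusion-exclusion. The only place that requires a little care is verifying that the substitution correctly converts the integrand and Jacobian in both cases $i=0,1$, and that the signs in the four-term inclusion-exclusion match the asserted formula; these are routine verifications.
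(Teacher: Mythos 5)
Your proof is correct and takes essentially the same route as the paper's: both reduce $D_i$ to the four regions $S_{h_1,h_2}$ via the standard two-dimensional inclusion-exclusion that expresses $[0,1]^2$ as a signed sum of four lower-left quadrants. You merely make explicit the rescaling change of variables $(x',y')=((x-a)/R,(y-b)/R)$, which the paper leaves implicit by directly writing $C_i$ with arguments such as $(1-a)/R$.
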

\begin{proof}
    We divide the unit square $[0,1]^2$ into two positive regions, $(-\infty, 0)\times (-\infty,0)$ and $(-\infty, 1]\times (-\infty,1]$, and two negative regions, $(-\infty, 1]\times (-\infty,0)$ and $(-\infty, 0)\times (-\infty,1]$. Over the positive regions, the integrations are $C_i\left(\frac{1-a}{R},\frac{1-b}{R}\right)$ and $+C_i\left(\frac{-a}{R},\frac{-b}{R}\right)$ respectively. Over the negative regions, the integrations are $-C_i\left(\frac{1-a}{R},\frac{-b}{R}\right)$ and $-C_i\left(\frac{-a}{R},\frac{1-b}{R}\right)$ respectively. We can use \cref{closed-form-C} to derive the closed-form formulas for $D_0$ and $D_1$.
\end{proof} 

Finally, we establish the closed-form formulas for $g_2$ and $g_3$.
\begin{theorem}\label{closed-form-g2-g3}
For any $O=(a,b)\in \mathbb{R}^2$, we have  
    \[g_2(O)=R-R^3 D_0(a,b,R)+R^3 D_1(a,b,R)\]
and
    \[g_3(O)=1-R^2 D_0(a,b,R),\]
where \[R=\frac{3}{4}g_1(O).\]
\end{theorem}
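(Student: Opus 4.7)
The plan is to unfold the definitions of $g_2$ and $g_3$ and recognize the resulting integrals directly as multiples of $D_0$ and $D_1$. Since $v$ is uniform on $[0,1]^2$, expectations over $v$ become plain area integrals, and the definitions of $D_0$ and $D_1$ were set up precisely to absorb the normalization factors $R^2$ and $R^3$.

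First I would handle $g_3(O)$, which is the easier of the two. By definition, $g_3(O)$ equals the Lebesgue measure of $\{x\in[0,1]^2 : d(O,x)>R\}$, which is the complement inside the unit square of the set where $d(O,x)\le R$. Writing this as
\[
g_3(O)\;=\;1-\int_0^1\!\int_0^1 \mathds{1}\!\left(d(O,v)\le R\right)\,dx\,dy,
\]
and recognizing the integral as $R^2\,D_0(a,b,R)$ by the definition of $D_0$ gives the claimed formula $g_3(O)=1-R^2 D_0(a,b,R)$.

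Next I would treat $g_2(O)$ by splitting the minimum on the event $\{d(O,v)\le R\}$ versus its complement. Writing $\min\{d(O,v),R\}=d(O,v)\,\mathds{1}(d(O,v)\le R)+R\,\mathds{1}(d(O,v)>R)$ and taking expectation against the uniform distribution on $[0,1]^2$ yields
\[
g_2(O)\;=\;\int_0^1\!\int_0^1 d(O,v)\,\mathds{1}(d(O,v)\le R)\,dx\,dy\;+\;R\cdot g_3(O).
\]
The first integral equals $R^3 D_1(a,b,R)$ by the definition of $D_1$, and substituting the expression for $g_3(O)$ obtained above gives $g_2(O)=R^3 D_1(a,b,R)+R(1-R^2 D_0(a,b,R))=R-R^3 D_0(a,b,R)+R^3 D_1(a,b,R)$, which is the claimed identity.

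There is no real obstacle here: the entire content of the theorem is bookkeeping that confirms the definitions of $D_0$ and $D_1$ were chosen so that $g_2$ and $g_3$ evaluate cleanly. The only thing to be careful about is that $R=\tfrac{3}{4}g_1(O)$ is strictly positive (so that dividing by $R^2$ and $R^3$ in the definitions of $D_0,D_1$ is legitimate), which holds automatically since $g_1(O)>0$ for every $O\in\mathbb{R}^2$ as $v$ is not almost surely equal to $O$.
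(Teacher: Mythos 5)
Your proof is correct and follows essentially the same route as the paper: unfold the expectations as integrals over $[0,1]^2$, use the identity $\min\{d(O,v),R\}=R\,\mathds{1}(d(O,v)>R)+d(O,v)\,\mathds{1}(d(O,v)\le R)$ together with $\mathds{1}(d(O,v)>R)=1-\mathds{1}(d(O,v)\le R)$, and match the resulting integrals to the definitions of $D_0$ and $D_1$. The added remark that $R=\tfrac{3}{4}g_1(O)>0$ (so the normalizations in $D_0,D_1$ are legitimate) is a harmless and reasonable check that the paper leaves implicit.
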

\begin{proof}
By definition, we have
\[g_2(O)=\int_{0}^1 \int_0^1 \min\left\{d(O,v),R\right\}\,dx\,dy\]
and 
\[g_3(O)=\int_{0}^1 \int_0^1 \mathds{1}\left(d(O,v)>R\right)\,dx\,dy.\]
The closed-form formulas for $g_2(O)$ and $g_3(O)$ follows from \cref{closed-form-g1}, \cref{closed-form-D}, and the equations
\[\min\left\{d(O,v),R\right\}= R\;\mathds{1}\left(d(O,v)> R\right)+d(O,v) \mathds{1}\left(d(O,v)\le R\right)\]
and 
\[\mathds{1}\left(d(O,v)> R\right)=1-\mathds{1}\left(d(O,v)\le R\right).\]
\end{proof}

\subsection*{Acknowledgments}
We thank Claire Mathieu for helpful preliminary discussions.

\end{appendix}

\bibliographystyle{abbrvurl}
\bibliography{references}

\begin{thebibliography}{10}

\bibitem{adamaszek2010ptas}
A.~Adamaszek, A.~Czumaj, and A.~Lingas.
\newblock {PTAS} for $k$-tour cover problem on the plane for moderately large
  values of $k$.
\newblock {\em International Journal of Foundations of Computer Science},
  21(06):893--904, 2010.

\bibitem{altinkemer1987heuristics}
K.~Altinkemer and B.~Gavish.
\newblock Heuristics for unequal weight delivery problems with a fixed error
  guarantee.
\newblock {\em Operations Research Letters}, 6(4):149--158, 1987.

\bibitem{altinkemer1990heuristics}
K.~Altinkemer and B.~Gavish.
\newblock Heuristics for delivery problems with constant error guarantees.
\newblock {\em Transportation Science}, 24(4):294--297, 1990.

\bibitem{anbuudayasankar2016models}
S.~P. Anbuudayasankar, K.~Ganesh, and S.~Mohapatra.
\newblock {\em Models for practical routing problems in logistics}.
\newblock Springer, 2016.

\bibitem{apple1997_1}
K.~Appel and W.~Haken.
\newblock {Every planar map is four colorable. Part I: Discharging}.
\newblock {\em Illinois Journal of Mathematics}, 21(3):429 -- 490, 1977.

\bibitem{appel1977_2}
K.~Appel, W.~Haken, and J.~Koch.
\newblock {Every planar map is four colorable. Part II: Reducibility}.
\newblock {\em Illinois Journal of Mathematics}, 21(3):491--567, 1977.

\bibitem{Archimedes}
Archimedes.
\newblock {\em On the sphere and cylinder, Book I}.
\newblock c. 225 BCE.

\bibitem{Archimedes2}
Archimedes.
\newblock {\em Measurement of a Circle}.
\newblock c. 250 BCE.

\bibitem{arora1998polynomial}
S.~Arora.
\newblock Polynomial time approximation schemes for {Euclidean} traveling
  salesman and other geometric problems.
\newblock {\em Journal of the ACM (JACM)}, 45(5):753--782, 1998.

\bibitem{arora2003approximation}
S.~Arora.
\newblock Approximation schemes for {NP}-hard geometric optimization problems:
  A survey.
\newblock {\em Mathematical Programming}, 97(1):43--69, 2003.

\bibitem{asano2001new}
T.~Asano, N.~Katoh, and K.~Kawashima.
\newblock A new approximation algorithm for the capacitated vehicle routing
  problem on a tree.
\newblock {\em Journal of Combinatorial Optimization}, 5(2):213--231, 2001.

\bibitem{asano1997covering}
T.~Asano, N.~Katoh, H.~Tamaki, and T.~Tokuyama.
\newblock Covering points in the plane by $k$-tours: towards a polynomial time
  approximation scheme for general $k$.
\newblock In {\em ACM Symposium on Theory of Computing (STOC)}, pages 275--283,
  1997.

\bibitem{beardwood1959shortest}
J.~Beardwood, J.~H. Halton, and J.~M. Hammersley.
\newblock The shortest path through many points.
\newblock In {\em Mathematical Proceedings of the Cambridge Philosophical
  Society}, volume~55, pages 299--327. Cambridge University Press, 1959.

\bibitem{becker2018tight}
A.~Becker.
\newblock {A tight 4/3 approximation for capacitated vehicle routing in trees}.
\newblock In {\em International Conference on Approximation Algorithms for
  Combinatorial Optimization Problems}, volume 116, pages 3:1--3:15, 2018.

\bibitem{becker2017quasi}
A.~Becker, P.~N. Klein, and D.~Saulpic.
\newblock A quasi-polynomial-time approximation scheme for vehicle routing on
  planar and bounded-genus graphs.
\newblock In {\em 25th Annual European Symposium on Algorithms (ESA)}, 2017.

\bibitem{becker2018polynomial}
A.~Becker, P.~N. Klein, and D.~Saulpic.
\newblock Polynomial-time approximation schemes for $k$-center, $k$-median, and
  capacitated vehicle routing in bounded highway dimension.
\newblock In {\em 26th Annual European Symposium on Algorithms (ESA)}, 2018.

\bibitem{becker2019ptas}
A.~Becker, P.~N. Klein, and A.~Schild.
\newblock {A PTAS for bounded-capacity vehicle routing in planar graphs}.
\newblock In {\em Workshop on Algorithms and Data Structures}, pages 99--111.
  Springer, 2019.

\bibitem{becker2019framework}
A.~Becker and A.~Paul.
\newblock A framework for vehicle routing approximation schemes in trees.
\newblock In {\em Workshop on Algorithms and Data Structures}, pages 112--125.
  Springer, 2019.

\bibitem{blauth2023improving}
J.~Blauth, V.~Traub, and J.~Vygen.
\newblock Improving the approximation ratio for capacitated vehicle routing.
\newblock {\em Mathematical Programming}, 197(2):451--497, 2023.

\bibitem{bompadre2006improved}
A.~Bompadre, M.~Dror, and J.~B. Orlin.
\newblock Improved bounds for vehicle routing solutions.
\newblock {\em Discrete Optimization}, 3(4):299--316, 2006.

\bibitem{bompadre2007probabilistic}
A.~Bompadre, M.~Dror, and J.~B. Orlin.
\newblock Probabilistic analysis of unit-demand vehicle routeing problems.
\newblock {\em Journal of applied probability}, 44(1):259--278, 2007.

\bibitem{cohen2020light}
V.~{Cohen-Addad}, A.~Filtser, P.~N. Klein, and H.~Le.
\newblock On light spanners, low-treewidth embeddings and efficient traversing
  in minor-free graphs.
\newblock In {\em Symposium on Foundations of Computer Science (FOCS)}, pages
  589--600. IEEE, 2020.

\bibitem{cordeau2007vehicle}
J.-F. Cordeau, G.~Laporte, M.~W. Savelsbergh, and D.~Vigo.
\newblock Vehicle routing.
\newblock {\em Handbooks in operations research and management science},
  14:367--428, 2007.

\bibitem{crainic2012fleet}
T.~G. Crainic and G.~Laporte.
\newblock {\em Fleet management and logistics}.
\newblock Springer Science \& Business Media, 2012.

\bibitem{daganzo1984distance}
C.~F. Daganzo.
\newblock The distance traveled to visit n points with a maximum of c stops per
  vehicle: An analytic model and an application.
\newblock {\em Transportation science}, 18(4):331--350, 1984.

\bibitem{das2015quasipolynomial}
A.~Das and C.~Mathieu.
\newblock A quasi-polynomial time approximation scheme for {Euclidean}
  capacitated vehicle routing.
\newblock {\em Algorithmica}, 73(1):115--142, 2015.

\bibitem{DMZ23}
M.~Dufay, C.~Mathieu, and H.~Zhou.
\newblock An approximation algorithm for distance-constrained vehicle routing
  on trees.
\newblock In {\em 40th International Symposium on Theoretical Aspects of
  Computer Science (STACS 2023)}, volume 254, pages 27:1--27:16, 2023.

\bibitem{friggstad2021improved}
Z.~Friggstad, R.~Mousavi, M.~Rahgoshay, and M.~R. Salavatipour.
\newblock Improved approximations for capacitated vehicle routing with
  unsplittable client demands.
\newblock In {\em International Conference on Integer Programming and
  Combinatorial Optimization}, pages 251--261. Springer, 2022.

\bibitem{friggstad2014approximation}
Z.~Friggstad and C.~Swamy.
\newblock Approximation algorithms for regret-bounded vehicle routing and
  applications to distance-constrained vehicle routing.
\newblock In {\em Proceedings of the forty-sixth annual ACM Symposium on Theory
  of Computing (STOC)}, pages 744--753, 2014.

\bibitem{gillett1974heuristic}
B.~E. Gillett and L.~R. Miller.
\newblock A heuristic algorithm for the vehicle-dispatch problem.
\newblock {\em Operations Research}, 22(2):340--349, 1974.

\bibitem{golden2008vehicle}
B.~Golden, S.~Raghavan, and E.~Wasil.
\newblock {\em The vehicle routing problem: latest advances and new
  challenges}, volume~43 of {\em Operations Research/Computer Science
  Interfaces Series}.
\newblock Springer, 2008.

\bibitem{GMZ2023}
F.~Grandoni, C.~Mathieu, and H.~Zhou.
\newblock Unsplittable {Euclidean} capacitated vehicle routing: A
  $(2+\varepsilon)$-approximation algorithm.
\newblock In {\em 14th Innovations in Theoretical Computer Science Conference
  (ITCS 2023)}, volume 251, pages 63:1--63:13, 2023.

\bibitem{haimovich1985bounds}
M.~Haimovich and A.~H.~G. Rinnooy~Kan.
\newblock Bounds and heuristics for capacitated routing problems.
\newblock {\em Mathematics of operations Research}, 10(4):527--542, 1985.

\bibitem{hales2005proof}
T.~C. Hales.
\newblock {A proof of the Kepler conjecture}.
\newblock {\em Annals of mathematics}, pages 1065--1185, 2005.

\bibitem{hansen2003global}
E.~Hansen and G.~W. Walster.
\newblock {\em Global optimization using interval analysis: revised and
  expanded}, volume 264.
\newblock CRC Press, 2003.

\bibitem{jayaprakash2023approximation}
A.~Jayaprakash and M.~R. Salavatipour.
\newblock Approximation schemes for capacitated vehicle routing on graphs of
  bounded treewidth, bounded doubling, or highway dimension.
\newblock {\em ACM Trans. Algorithms}, 19(2), 2023.

\bibitem{karp1977probabilistic}
R.~M. Karp.
\newblock Probabilistic analysis of partitioning algorithms for the
  traveling-salesman problem in the plane.
\newblock {\em Mathematics of operations research}, 2(3):209--224, 1977.

\bibitem{kv}
M.~Kashiwagi.
\newblock {kv -- a C++ Library for Verified Numerical Computation}.
\newblock \url{http://verifiedby.me/kv/index-e.html}.
\newblock Accessed April 2023.

\bibitem{khachay2016ptas}
M.~Khachay and R.~Dubinin.
\newblock {PTAS for the Euclidean} capacitated vehicle routing problem in
  $\mathbb{R}^d$.
\newblock In {\em International Conference on Discrete Optimization and
  Operations Research}, pages 193--205. Springer, 2016.

\bibitem{laporte1992vehicle}
G.~Laporte.
\newblock The vehicle routing problem: An overview of exact and approximate
  algorithms.
\newblock {\em European journal of operational research}, 59(3):345--358, 1992.

\bibitem{laporte2000classical}
G.~Laporte, M.~Gendreau, J.-Y. Potvin, and F.~Semet.
\newblock Classical and modern heuristics for the vehicle routing problem.
\newblock {\em International transactions in operational research},
  7(4-5):285--300, 2000.

\bibitem{li1990worst}
C.~L. Li and D.~Simchi-Levi.
\newblock Worst-case analysis of heuristics for multidepot capacitated vehicle
  routing problems.
\newblock {\em ORSA Journal on Computing}, 2(1):64--73, 1990.

\bibitem{li1992distance}
C.-L. Li, D.~{Simchi-Levi}, and M.~Desrochers.
\newblock On the distance constrained vehicle routing problem.
\newblock {\em Operations Research}, 40(4):790--799, 1992.

\bibitem{li2016moments}
H.~Li and X.~Qiu.
\newblock Moments of distance from a vertex to a uniformly distributed random
  point within arbitrary triangles.
\newblock {\em Mathematical Problems in Engineering}, 2016, 2016.

\bibitem{MZ22}
C.~Mathieu and H.~Zhou.
\newblock {Iterated tour partitioning for Euclidean capacitated vehicle
  routing}.
\newblock {\em Random Structures \& Algorithms}, pages 1--20, 2022.
\newblock \href {https://doi.org/10.1002/rsa.21130}
  {\path{doi:10.1002/rsa.21130}}.

\bibitem{MZ23_PTAS}
C.~Mathieu and H.~Zhou.
\newblock A {PTAS} for capacitated vehicle routing on trees.
\newblock {\em ACM Transactions on Algorithms (TALG)}, 19(2), 2023.

\bibitem{MZ22_unsplittable}
C.~Mathieu and H.~Zhou.
\newblock A tight $(1.5 + \varepsilon)$-approximation for unsplittable
  capacitated vehicle routing on trees.
\newblock {\em 50th International Colloquium on Automata, Languages, and
  Programming (ICALP 2023)}, 2023.

\bibitem{MZ23graphic}
T.~M{\"o}mke and H.~Zhou.
\newblock Capacitated vehicle routing in graphic metrics.
\newblock In {\em Symposium on Simplicity in Algorithms (SOSA)}, pages
  114--123. SIAM, 2023.

\bibitem{nagarajan2012approximation}
V.~Nagarajan and R.~Ravi.
\newblock Approximation algorithms for distance constrained vehicle routing
  problems.
\newblock {\em Networks}, 59(2):209--214, 2012.

\bibitem{viger}
{Private communication with Fabien Viger at Google}, 2022.

\bibitem{rhee1994probabilistic}
W.~T. Rhee.
\newblock Probabilistic analysis of a capacitated vehicle routing problem {II}.
\newblock {\em The Annals of Applied Probability}, 4(3):741--764, 1994.

\bibitem{richeson2015circular}
D.~Richeson.
\newblock Circular reasoning: who first proved that {C} divided by d is a
  constant?
\newblock {\em The College Mathematics Journal}, 46(3):162--171, 2015.

\bibitem{schneider2014convex}
R.~Schneider.
\newblock {\em Convex bodies: the {Brunn--Minkowski} theory}.
\newblock Number 151. Cambridge university press, 2014.

\bibitem{stone1991some}
R.~E. Stone.
\newblock Some average distance results.
\newblock {\em Transportation Science}, 25(1):83--90, 1991.

\bibitem{szasz1917extremaleigenschaft}
O.~Szasz and A.~Rosenthal.
\newblock Eine extremaleigenschaft der kurven konstanter breite.
\newblock {\em Jahresbericht der Deutschen Mathematiker-Vereinigung},
  25:278--282, 1917.

\bibitem{toth2002vehicle}
P.~Toth and D.~Vigo.
\newblock {\em The Vehicle Routing Problem}.
\newblock Society for Industrial and Applied Mathematics, 2002.

\bibitem{wren1971computers}
A.~Wren.
\newblock {\em Computers in Transport Planning and Operation}.
\newblock Allan, 1971.

\bibitem{wren1972computer}
A.~Wren and A.~Holliday.
\newblock Computer scheduling of vehicles from one or more depots to a number
  of delivery points.
\newblock {\em Journal of the Operational Research Society}, 23(3):333--344,
  1972.

\bibitem{zwick2002computer}
U.~Zwick.
\newblock Computer assisted proof of optimal approximability results.
\newblock In {\em SODA}, pages 496--505, 2002.

\end{thebibliography}
\end{document}